\newcommand{\tr}{\mathop{\mathrm{tr}}}
\newcommand{\rank}{\mathop{\mathrm{rank}}}
\newtheorem*{theorem}{Theorem}
\newcommand{\spann}{\mathop{\mathrm{span}}}
\newcommand{\diag}{\mathop{\mathrm{diag}}}
\newtheorem{lemma}{Lemma}
\newtheorem{hypothesis}{Hypothesis}
\begin{document}


\title{Experimental Device-Independent Tests of Classical and Quantum Entropy}

\author{Feng Zhu}
\affiliation{Tsinghua National Laboratory for Information Science and Technology, Department of Electronic Engineering, Tsinghua University, Beijing, 100084, P. R. China}
\author{Wei Zhang}
\email{zwei@tsinghua.edu.cn}
\affiliation{Tsinghua National Laboratory for Information Science and Technology, Department of Electronic Engineering, Tsinghua University, Beijing, 100084, P. R. China}
\author{Sijing Chen}
\affiliation{State Key Laboratory of Functional Materials for Informatics, Shanghai Institute of Microsystem and Information Technology (SIMIT), Chinese Academy of Sciences, Shanghai 200050, China}
\author{Lixing You}
\affiliation{State Key Laboratory of Functional Materials for Informatics, Shanghai Institute of Microsystem and Information Technology (SIMIT), Chinese Academy of Sciences, Shanghai 200050, China}
\author{Zhen Wang}
\affiliation{State Key Laboratory of Functional Materials for Informatics, Shanghai Institute of Microsystem and Information Technology (SIMIT), Chinese Academy of Sciences, Shanghai 200050, China}
\author{Yidong Huang}
\affiliation{Tsinghua National Laboratory for Information Science and Technology, Department of Electronic Engineering, Tsinghua University, Beijing, 100084, P. R. China}

\date{\today}


\begin{abstract}

In this paper, we report an experiment about the device-independent tests of classical and quantum entropy based on a recent proposal [Phys. Rev. Lett. $\bf{115}$, 110501 (2015)], in which the states are encoded on the polarization of a biphoton system and measured by the state tomography technology. We also theoretically obtained the minimal quantum entropy for three widely used linear dimension witnesses. The experimental results agree well with the theoretical analysis, demonstrating that lower entropy is needed in quantum systems than that in classical systems under given values of the dimension witness.

\pacs{03.67.Mn,03.67.Dd,03.67.Hk,42.50.Xa}

\end{abstract}


\maketitle


\textit{Introduction}{\textemdash}The device-independent quantum information processing is attractive and developing rapidly. Since the imperfection of practical devices will reduce the security of the quantum key distribution, the device-independent quantum key distribution was proposed against the collective attacks from the eavesdroppers\cite{acin2007prl}. It is independent of the internal working of the devices used in the implementation. The security is guaranteed from the observed data without any reference on the states and measurements.

Tests of resources in quantum information are also proposed in the device-independent manner, in which the source and the detector in the prepare-and-measure scenario are regarded as ``black boxes''. For example, the entanglement\cite{Elkert1991prl} is a basic resource in quantum communication and quantum computation. Tests of the entanglement in the device-independent manner have been theoretically analyzed\cite{liang2011prl,Rabelo2011prl,liang2013prl} and experimentally demonstrated\cite{Barreiro2013natphy}. The dimension\cite{brunner2008prl} is another important resource for the system used in the quantum information processing. It can also be tested in the device-independent way\cite{gallego2010prl,bowles2014prl,pawlowski2011pra} and has been demonstrated experimentally\cite{hendrych2014natphy,ahrens2012natphy,ambrosio2014prl,ahrens2014prl}.



Entropy is an important fundamental resource which reveals the amount of information in the communication tasks\cite{holevo,palowski2009nat}. Device-independent tests of entropy were proposed recently\cite{chaves2015prl}. It is realized by constructing two entropy witnesses. The first one is based on the causal inference networks\cite{pearl2009}, in which the facets of the entropic cone can be characterized\cite{chaves2015nc,chaves2013arxiv,chaves2013ieee} associating a directed acyclic graph. It is a general method and valid for systems with arbitrary finite dimensions. However, it has an important drawback that it cannot discriminate the classical case from the quantum case, since the lower bounds of the classical and quantum entropy calculated by this way are the same. The other way is based on the convex optimization techniques, which can reveal the difference between the classical entropy and the quantum entropy\cite{chaves2015prl}. Utilizing the value of the dimension witness, the minimal classical entropy can be explicitly derived. An upper bound of the minimal quantum entropy can also be obtained using 4-dimensional systems. Whether it is exactly the minimal quantum entropy has not been investigated, since it is not clear that whether higher dimensional systems can be used to reduce the quantum entropy.




In this Letter, we theoretically investigate the minimal quantum entropy in systems with arbitrary dimension for any linear dimension witness, showing that it cannot be reduced by using higher dimensional systems and it is lower than the minimal classical entropy under the given value of the dimension witness. The classical and quantum entropy are tested experimentally, demonstrating their significant difference.




\begin{figure}[!htb]
\includegraphics[width=2.9 in]
{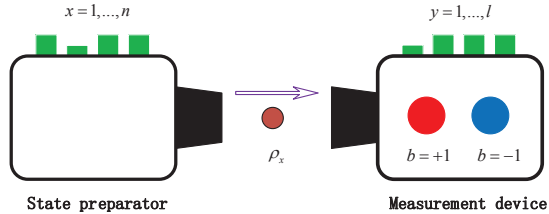}
\caption{Prepare-and-measure scenario.\label{g-1}}
\end{figure}

\begin{figure*}[!htbp]
\centering
\includegraphics[width=7.3 in]
{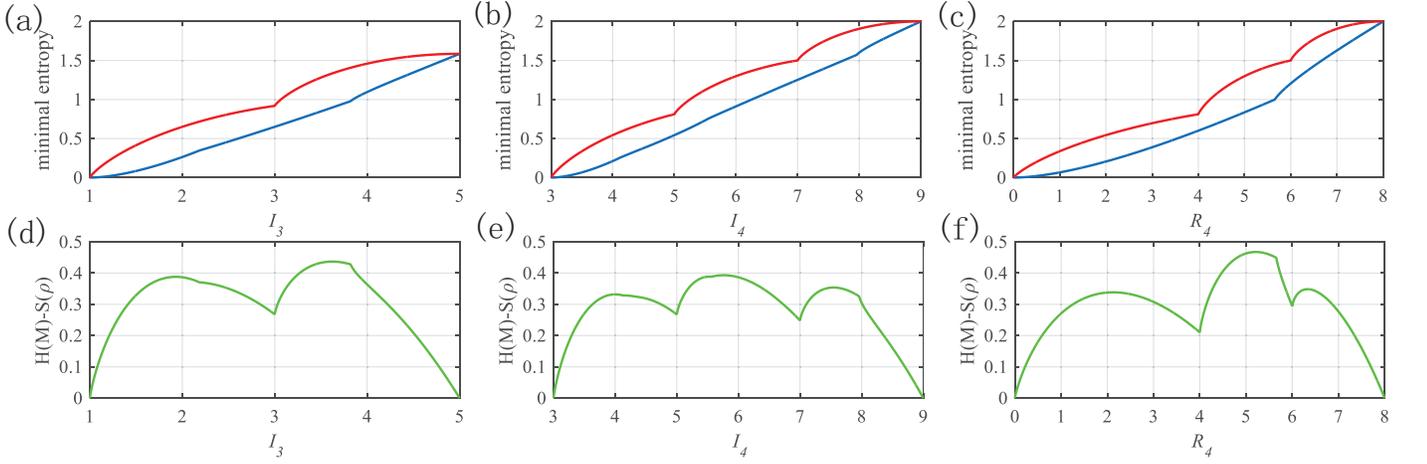}
\caption{The minimal classical(red) and quantum(blue) entropy under given values of different dimension witnesses. (a), (b) and (c) are the results for the dimension witnesses of $I_3$, $I_4$ and $R_4$. (d), (e) and (f) are the differences between the minimal classical and quantum entropy for each dimension witness. The units of the longitudinal coordinates in all figures are \textit{bit}.\label{g-2}}
\end{figure*}

\textit{Scenario}{\textemdash}The prepare-and-measure scenario we consider is illustrated in FIG.\ref{g-1}. The state preparator with $n$ buttons is shown by the left box. When button $x\in\{1,\ldots,n\}$ is pressed, it emits a message $M$ in the classical case or a state $\rho_x$ in the quantum case. The right box is the measurement device with $l$ buttons. When button $y\in\{1,\ldots,l\}$ is pressed, it performs a measurement $M_y$ on the input state, delivering the outcome $b\in\{-1,+1\}$. $P(b|x,y)$ represents the probability for yielding the result $b$ when the measurement $M_y$ is taken on the state $\rho_x$. The expectation value of the measurement result is $E_{xy}=P(+1|x,y)-P(-1|x,y)$.

The button $x$ and $y$ are pressed upon the observers' request while the probability distributions of $P(x)$ and $P(y)$ are uniform and independent, i.e., $P(x)=1/n$ and $P(y)=1/l$. In the case of a $d$-dimensional classical system, it obeys deterministic strategies labeled by $\lambda$ in the spirit of the ontological model\cite{onto}. Hence, $E_{xy}=\sum_m\sum_{\lambda}E(y,m,\lambda)P(m|x,\lambda)q_{\lambda}$, where $q_{\lambda}$ is the probability of the strategy $\lambda$, $\sum_{\lambda}q_{\lambda}=1$, $P(m|x,\lambda)\in\{0,1\}$ and $E(y,m,\lambda)\in\{-1,+1\}$. The probability distribution of the message is $p_m=\sum_x\sum_\lambda{P(m|x,\lambda)}q_{\lambda}/n$, where $m\in\{0,\ldots,d-1\}$. The Shannon entropy of the average message $M$ is $H(M)=-\sum_{m=0}^{d-1}p_m\log_2p_m$. In the case of a $d$-dimensional quantum system, $E_{xy}=\tr(\rho_xM_y)$ where the state $\rho_x$ and the measurement $M_y$ act on $\mathbb{C}^d$. The von Neumann entropy of the average emitted state is $S(\rho)=-\tr(\rho\log_2\rho)$, where $\rho=\sum_x\rho_x/n$.



\textit{Theoretical analysis}{\textemdash}To investigate the gap between the minimal classical and quantum entropy, we propose and prove the following theorem to obtain the minimal quantum entropy under given values of a linear dimension witness $w_d$.
\begin{align}
w_d=\sum_{x=1}^n\sum_{y=1}^l\alpha_{xy}\tr(\rho_xM_y)
\end{align}
Specifically, there are three widely used linear dimension witnesses $I_3$, $I_4$, and $R_4$\cite{gallego2010prl,pawlowski2011pra},
\begin{align}
I_3=&E_{11}+E_{12}+E_{21}-E_{22}-E_{31}\\
I_4=&E_{11}+E_{12}+E_{13}+E_{21}+E_{22}-E_{23}+E_{31}-E_{32}-E_{41}\\
R_4=&E_{11}+E_{12}+E_{21}-E_{22}-E_{31}+E_{32}-E_{41}-E_{42}
\end{align}

\begin{theorem}
Given the value of a linear dimension witness $\sum_{x=1}^n\sum_{y=1}^l\alpha_{xy}\tr(\rho_xM_y)=w_d$, the minimum value of $S(\rho)$, where $\rho=({\rho_1+\ldots+\rho_n})/n$, can be obtained when $\rho_k(1{\leq}k{\leq}n)$ are all rank-1 and in $\mathbb{C}^n$.
\end{theorem}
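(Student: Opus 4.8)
The plan is to treat the problem as a minimization of the von Neumann entropy over the \emph{states} alone, after first removing the measurements. Since the witness $w_d=\sum_{xy}\alpha_{xy}\tr(\rho_xM_y)$ is bilinear in $(\rho_x,M_y)$, the joint feasible set is not convex, so I would not optimize jointly. Instead I would start from a global minimizer $(\{\rho_x^0\},\{M_y^0\})$---which exists because the states and the operators $-\mathbb{1}\preceq M_y\preceq\mathbb{1}$ range over compact sets and both the witness and $S$ are continuous---and simply \emph{freeze} the measurements at $M_y^0$. Setting $B_x=\sum_y\alpha_{xy}M_y^0$, the witness becomes the linear functional $w_d=\sum_x\tr(\rho_xB_x)$, and the remaining feasible set $C=\{(\rho_1,\dots,\rho_n):\rho_x\succeq0,\ \tr\rho_x=1,\ \sum_x\tr(\rho_xB_x)=w_d\}$ is convex and compact.

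The heart of the argument is to minimize $S(\rho)$, $\rho=\tfrac1n\sum_x\rho_x$, over $C$. Because $S$ is concave and $C$ is convex and compact, the minimum is attained at an extreme point of $C$, so it suffices to show every extreme point of $C$ is a tuple of rank-one states. Here I would use that the minimal face of the density operators containing a rank-$r$ state has real dimension $r^2-1$, so the tangent space to $\prod_x(\text{density operators})$ at a tuple of ranks $(r_x)$ has dimension $\sum_x(r_x^2-1)$. Intersecting with the single scalar hyperplane $\sum_x\tr(\rho_xB_x)=w_d$ removes at most one dimension, so a tuple can be extreme only if $\sum_x(r_x^2-1)\le1$. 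The key observation is that $r^2-1\in\{0,3,8,\dots\}$ jumps by at least three the moment a block leaves rank one; hence $\sum_x(r_x^2-1)\le1$ forces every $r_x=1$. Thus the extreme point reached is a pure tuple with witness exactly $w_d$ and $S(\rho)$ no larger than at the original minimizer, which by optimality means it is itself a minimizer.

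With all $\rho_x=|\psi_x\rangle\langle\psi_x|$ pure, the dimension claim is then immediate: $\rho=\tfrac1n\sum_x|\psi_x\rangle\langle\psi_x|$ is supported on $V=\spann\{|\psi_x\rangle:1\le x\le n\}$, so $\dim V\le n$. Projecting the frozen measurements onto $V$ via $P_V$ leaves every $\tr(\rho_xM_y^0)$, hence $w_d$, unchanged and preserves $-\mathbb{1}\preceq P_VM_y^0P_V\preceq\mathbb{1}$, while $S(\rho)$ is unchanged because the nonzero spectrum of $\rho$ is untouched. Identifying $V$ with a subspace of $\mathbb{C}^n$ yields the stated rank-one, $\mathbb{C}^n$ realization.

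I expect the extreme-point step to be the main obstacle: the concavity of the objective together with the single constraint coupling all blocks $\rho_x$ is what makes purity nonobvious, and the argument hinges on the integrality of $r^2-1$ defeating one scalar constraint rather than on any explicit perturbation. A secondary point I would check carefully is the legitimacy of freezing the measurements---this is justified precisely because I argue from an already-optimal configuration---and, if the actual minimal value is wanted, I would further note that for fixed states the witness-optimal measurements are $M_y=\mathrm{sgn}(A_y)$ with $A_y=\sum_x\alpha_{xy}\rho_x$, reducing the objective to $\sum_y\|A_y\|_1$.
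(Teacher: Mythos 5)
Your argument is correct in substance but reaches the conclusion by a genuinely different route than the paper. The paper works constructively: it proves (Lemmas 1--3 of its Supplementary Material) that any density matrix $\rho_x$ admits an explicit convex decomposition $\rho_x=\sum_{k}\nu_{x,k}\rho_{x,k}$ into rank-one states all satisfying $\tr(\rho_{x,k}M^{(x)})=\tr(\rho_xM^{(x)})$ with $M^{(x)}=\sum_y\alpha_{xy}M_y$, built by an iterative two-dimensional rank-reduction (an intermediate-value argument produces two pure states on a $2$-plane with the right expectation value, peeling the rank down one step at a time); concavity of $S$ then selects one pure tuple $(\rho_{1,t_1},\dots,\rho_{n,t_n})$ with the same witness value and no larger entropy, and Gram--Schmidt gives the $\mathbb{C}^n$ reduction. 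You instead freeze the measurements, observe that the feasible set of state tuples is convex and compact and the objective concave, invoke the extreme-point (Bauer-type) minimum principle, and identify the extreme points by face-dimension counting: $\sum_x(r_x^2-1)\le 1$ with $r^2-1\in\{0,3,8,\dots\}$ forces all $r_x=1$. This is shorter and makes transparent \emph{why} purity is forced (one scalar constraint cannot defeat the integrality jump of $r^2-1$), whereas the paper's decomposition is elementary, self-contained, and yields explicit pure ensembles; both proofs rest on the same two pillars (concavity of $S$ and the span argument for $\mathbb{C}^n$), so the difference is in how the reduction to pure states is established.

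One point you should repair: you justify the existence of a global minimizer $(\{\rho_x^0\},\{M_y^0\})$ by compactness, but the optimization ranges over all finite Hilbert-space dimensions, and that set is not compact; indeed, whether the infimum over all dimensions is attained is essentially what the theorem is asserting. The fix is to run your reduction on an \emph{arbitrary} feasible configuration in an arbitrary fixed dimension $d$ (where compactness of the slice $C$ does hold): freezing the measurements and passing to a minimizing extreme point of $C$ produces a rank-one tuple in $\mathbb{C}^n$ with the same witness value and entropy no larger than that of the starting configuration. This shows the infimum over all dimensions equals the infimum over rank-one tuples in $\mathbb{C}^n$, which is attained by compactness of that restricted set; the global minimizer's existence is then a conclusion rather than a hypothesis. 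With that reordering your proof is complete.
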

\begin{proof}
See Sec.A of Supplementary Material.
\end{proof}

According to the theorem, we only need to consider the rank-1 states $\rho_x$ in a $n$-dimensional Hilbert space, which can be expressed as $\rho_x=|\psi_x{\rangle}{\langle}\psi_x|$, where
\begin{center}
\begin{align}
|\psi_1{\rangle}&=(1,0,\ldots,0) \nonumber\\
|\psi_2{\rangle}&=(\cos{\theta_{1,1}},e^{i{\varphi}_{1,1}}\sin{\theta_{1,1}},0,\ldots) \nonumber\\
|\psi_3{\rangle}&=(\cos{\theta_{2,1}},e^{i{\varphi}_{2,1}}\sin{\theta_{2,1}}\cos{\theta_{2,2}},e^{i{\varphi}_{2,2}}\sin{\theta_{2,1}}\sin{\theta_{2,2}},\ldots)\nonumber\\
&\ldots \nonumber\\
|\psi_n{\rangle}&=(\cos{\theta_{n-1,1}},\ldots,e^{i{\varphi}_{n-1,n-1}}\prod_{k=1}^{n-1}\sin{\theta_{n-1,k}})\label{eq-0}
\end{align}
\end{center}
Since the eigenvalue of the measurement $M_y$ is +1 or -1, the dimension witness has an upper bound of
\begin{align}
w_d=\sum_{x=1}^n\sum_{y=1}^l\alpha_{xy}\tr(\rho_xM_y){\leq}\sum_{y=1}^l\sum_k|\lambda_{yk}| \label{eq-1}
\end{align}
where $\{\lambda_{yk}\}$ are the eigenvalues of $\rho^{(y)}$ and $\rho^{(y)}=\sum_{x=1}^n\alpha_{xy}\rho_x$. The minimal quantum entropy under the given value of $\sum_{y=1}^l\sum_k|\lambda_{yk}|$ are obtained for the cases of $I_3$, $I_4$ and $R_4$ numerically using \textit{fmincon} in \textit{MATLAB}. The calculation results show that the minimal quantum entropy is a monotone increasing function of $\sum_{y=1}^l\sum_k|\lambda_{yk}|$. Due to Eq.(\ref{eq-1}), this function also expresses the relation between the minimal quantum entropy and the given value of the dimension witness. It is indicated by the blue curves in FIG.\ref{g-2}(a)$\sim$(c). On the other hand, the minimal classical entropy under given values of the dimension witness $I_3$, $I_4$ and $R_4$ are shown explicitly in Ref.\cite{chaves2015prl}. They are calculated and indicated by the red curves in FIG.\ref{g-2}(a)$\sim$(c), respectively.


\begin{figure*}[!htb]
\includegraphics[width=7.1 in]
{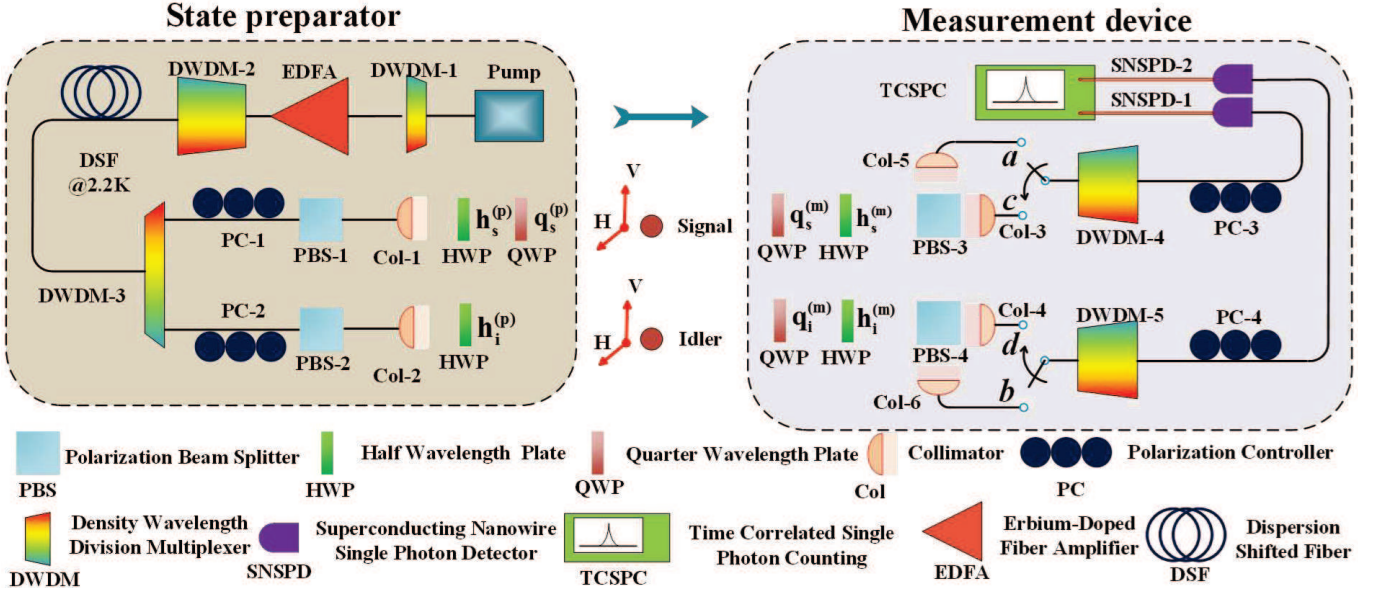}
\caption{ The experimental setup. The left part is the state preparator. The linearly polarized pulsed pump light is generated by a passive mode-locked fiber laser with a repetitive rate of 40 MHz. Its line width is narrowed to 132GHz by an optical filter (DWDM-1) with a central wavelength of 1552.52 nm. Then it is amplified by an erbium doped fiber amplifier (EDFA). The noise produced by the EDFA is suppressed by another optical filter (DWDM-2). The correlated photon pairs are generated in a piece of dispersion shifted fiber (DSF) with a length of 250 meters. It is placed in a cryostat with the superconducting nanowire single photon detectors (SNSPDs) used in this experiment and is cooled to 2.2 K to suppress the noise photons generated by the spontaneous Raman scattering. The signal and idler photons are selected and routed to two paths by the third optical filter (DWDM-3). Both of them have a linewidth of 63GHz. Two polarization controllers (PC-1 and PC-2) and two polarization beam splitters (PBS-1 and PBS-2) are used to collimate the polarization of the signal and idler photons to the vertical direction. Then, the photons are coupled to free-space by two collimators (Col-1 and Col-2). The quarter wave plate (QWP-$q_s^{(p)}$) and half wave plates (HWP-$h_s^{(p)}$ and HWP-$h_i^{(p)}$) are used to encode the information on the state of the photon pairs. The right part is the measurement device. The input photons pass through two half wave plates (HWP-$h_s^{(m)}$ and HWP-$h_i^{(m)}$) and two quarter wave plates (QWP-$q_s^{(m)}$ and QWP-$q_i^{(m)}$), then they are directed to four ports (a, b, c and d) by two polarization beam splitters (PBS-3 and PBS-4). These components are used to realize the projection measurement of the biphoton states. Four collimators (Col-3$\sim$Col-6) are used to couple the photons back to the fiber from different ports. The signal and idler photons from two specific ports are selected to be detected by two SNSPDs (fabricated by SIMIT, China). Their efficiencies and dark counts are about 40$\%$ and 80 Hz, respectively. Before the single photon detection, two additional optical filters (DWDM-4 and DWDM-5) are used to filter out the noise and two polarization controllers (PC-3 and PC-4) are used to collimate the polarizations of the photons since the efficiencies of the SNSPDs are polarization dependent. The detection events of the SNSPDs are recorded by a time correlated single photon counting module (TCSPC, PicoQuant, PicoHarp 400).}\label{g-3}
\end{figure*}

\begin{table}[htbp]
\centering
\caption{Maximum differences between minimal quantum and classical entropy for $I_3$, $I_4$ and $R_4$.}\label{t-1}
  \begin{tabular}{ cccc}
    \hline \hline
          & $H(M)$(bit) & S($\rho$)(bit) & $H(M)-S(\rho)$(bit) \\
    \hline    
    $I_3=3.622$ & 1.334 & 0.897 & 0.437  \\
    $I_4=5.760$ & 1.223 & 0.829 & 0.394  \\
    $R_4=5.211$ & 1.356 & 0.888 & 0.468  \\
    \hline
    \hline
  \end{tabular}
\end{table}

The differences between the minimal quantum and classical entropy are indicated by the green curves in FIG.\ref{g-2}(d)$\sim$(f), which show that the minimal quantum entropy is lower than the minimal classical entropy under the given value of the dimension witness. The maximum differences are presented in TABLE.\ref{t-1}. The details about the states $\rho_x$, the measurements $M_y$, the deterministic expectation values $E_{m,y}^{(\lambda)}$, the deterministic probability distribution $P_{m,x}^{(\lambda)}$ and the probability of strategy $q_\lambda$ to realize the maximum differences for the dimension witnesses are shown in Sec.B of Supplementary Material.



\textit{Experimental demonstration}{\textemdash}We encode the information on polarizations of photon pairs\cite{bogdanov2004prl,james2001pra,thew2002pra,bogdanov2004pra} generated by the spontaneous four-wave-mixing in a piece of optical fiber\cite{NTT,kumar,fiber}, by which the 3-dimensional system for the test of $I_3$ and the 4-dimensional system for the tests of $I_4$ and $R_4$ are realized. The setup is shown in FIG.\ref{g-3}.

The state preparator in FIG.\ref{g-3} emits the photon pairs with information encoded on their polarizations. The four basis states (denoted by $\{|0{\rangle},|1{\rangle},|2{\rangle},|3{\rangle}\}$)  are $\{|V{\rangle}_s|V{\rangle}_i,|H{\rangle}_s|V{\rangle}_i,|H{\rangle}_s|H{\rangle}_i,|V{\rangle}_s|H{\rangle}_i\}$, where $s$ and $i$ stand for the signal photon and the idler photon, $H$ and $V$ stand for the horizontal and vertical polarization direction. Each state is prepared by rotating the angles of the quarter wave plate and two half wave plates in the preparator, which are denoted by $q_s^{(p)}$, $h_s^{(p)}$, and $h_i^{(p)}$. The state of the photon pair can be expressed as
\begin{center}
\begin{align}
|\psi{\rangle}&=\frac{1}{\sqrt{2}}[\cos{(2q_s^{(p)}-2h_s^{(p)})}-i\cos{2h_s^{(p)}}]\cos{2h_i^{(p)}}|0{\rangle}\nonumber\\
&+\frac{1}{\sqrt{2}}[\sin{(2q_s^{(p)}-2h_s^{(p)})}-i\sin{2h_s^{(p)}}]\cos{2h_i^{(p)}}|1{\rangle}\nonumber\\
&+\frac{1}{\sqrt{2}}[\sin{(2q_s^{(p)}-2h_s^{(p)})}-i\sin{2h_s^{(p)}}]\sin{2h_i^{(p)}}|2{\rangle}\nonumber\\
&+\frac{1}{\sqrt{2}}[\cos{(2q_s^{(p)}-2h_s^{(p)})}-i\cos{2h_s^{(p)}}]\sin{2h_i^{(p)}}|3{\rangle}\label{state-p}
\end{align}
\end{center}
It is used as a 4-dimensional system for cases of $I_4$ and $R_4$. For the case of $I_3$, only the first three terms are used.

For the classical case, each state is prepared to be one of the basis states $\{|0{\rangle}$, $|1{\rangle}$, $|2{\rangle}$, or $|3{\rangle}\}$, which is perfectly distinguishable. For different strategies, different $q_\lambda$ are realized by different measurement time durations of corresponding states. The rotation angles of $q_s^{(p)}$, $h_s^{(p)}$, and $h_i^{(p)}$ for cases of $I_3$, $I_4$, and $R_4$ are shown in Sec.C of Supplementary Material.

\begin{figure*}
\begin{minipage}[b]{.235\linewidth}
  \centering
  \subfigure[Real part of $\rho$ for $I_3$]{
    \includegraphics[width=\linewidth]{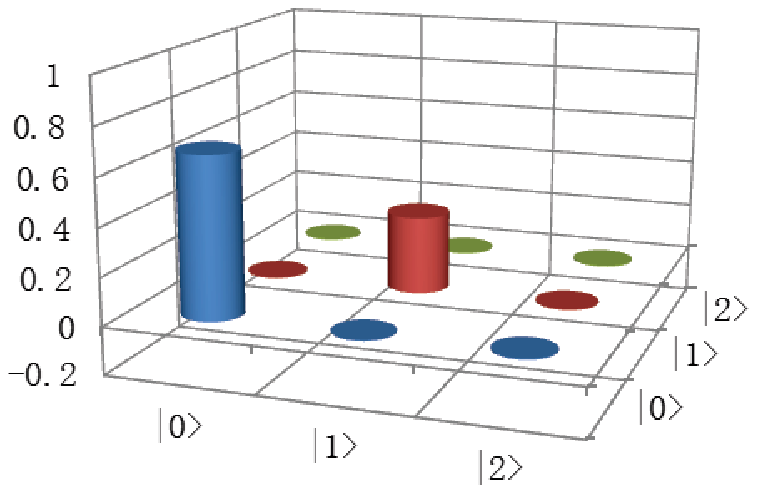}}
\end{minipage}
\begin{minipage}[b]{.235\linewidth}
  \subfigure[Imaginary part of $\rho$ for $I_3$]{
    \includegraphics[width=\linewidth]{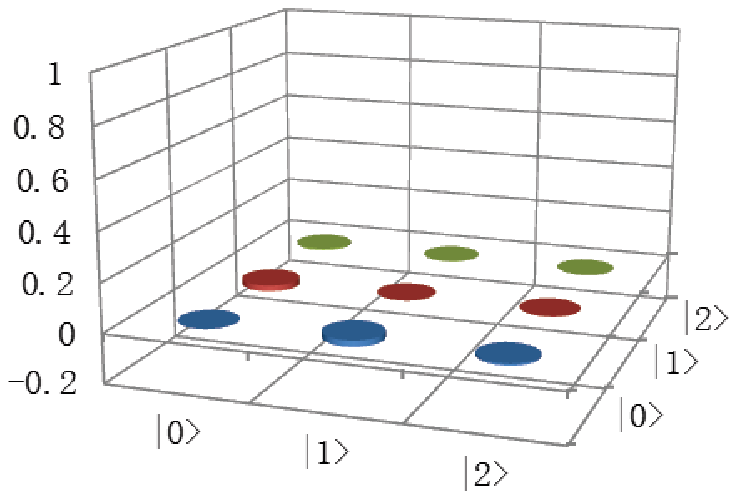}}
\end{minipage}
\begin{minipage}[b]{.21\linewidth}
  \subfigure[Distribution of $M$ for $I_3$]{
    \includegraphics[width=\linewidth]{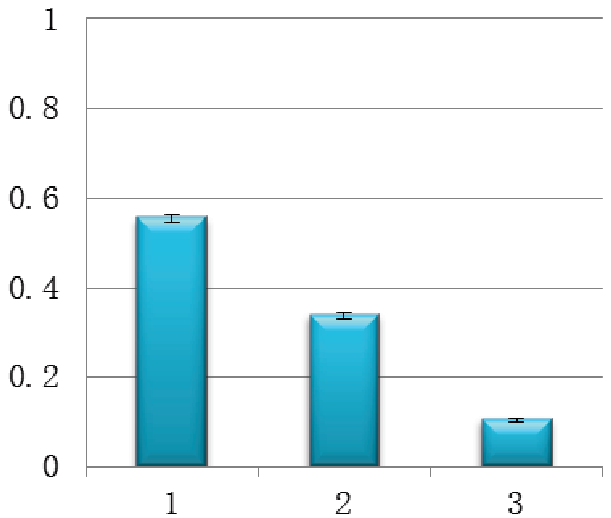}}
\end{minipage}
\begin{minipage}[b]{.3\linewidth}
  \subfigure[Results for $I_3$]{  \begin{tabular}[b]{ ccc}
    \hline \hline
          &  Value$_{th}$ & Value$_{exp}$  \\
    \hline
$w_d^{(c)} $ &   3.62  & 3.56(6)\\
$H(M)\text{ (bit) }$ &    1.33&1.34(2)\\
$w_d^{(q)}$   &3.62 &3.56(11)\\
$S(\rho)\text{ (bit) }$ &0.90&0.94(3)\\
   \hline
    \hline
  \end{tabular}}
  \label{fig:subfig} 
\end{minipage}

\begin{minipage}[b]{.235\linewidth}
  \centering
  \subfigure[Real part of $\rho$ for $I_4$]{
    \includegraphics[width=\linewidth]{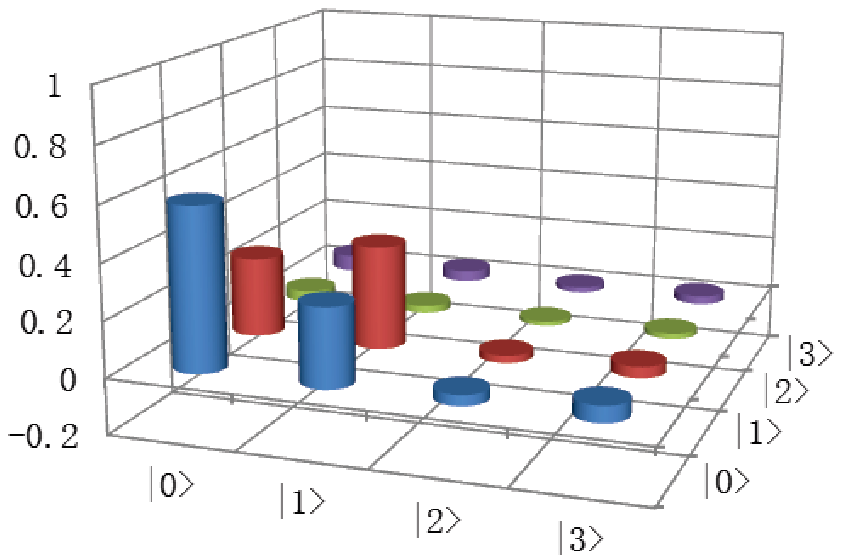}}
\end{minipage}
\begin{minipage}[b]{.235\linewidth}
  \subfigure[Imaginary part of $\rho$ for $I_4$]{
    \includegraphics[width=\linewidth]{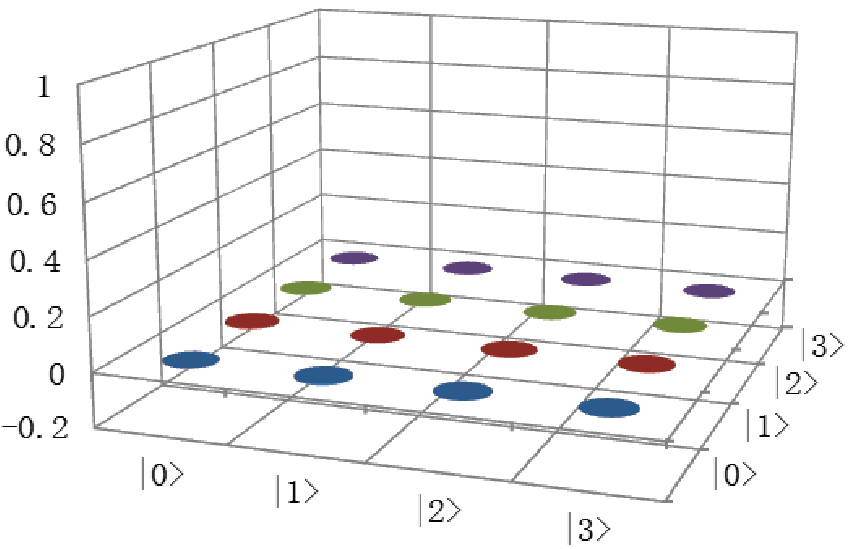}}
\end{minipage}
\begin{minipage}[b]{.21\linewidth}
  \subfigure[Distribution of $M$ for $I_4$]{
    \includegraphics[width=\linewidth]{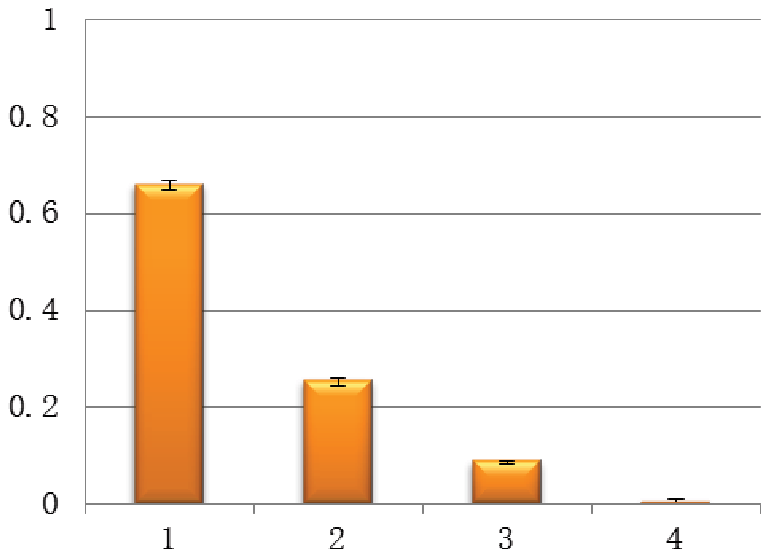}}
\end{minipage}
\begin{minipage}[b]{.3\linewidth}
  \subfigure[Results for $I_4$]{  \begin{tabular}[b]{ ccc}
    \hline \hline
          &  Value$_{th}$ & Value$_{exp}$  \\
    \hline
$w_d^{(c)} $ &      5.76 &5.67(7)\\
$H(M)\text{ (bit) }$ &1.22&1.22(6)\\
$w_d^{(q)}$   &5.76&5.63(13)\\
$S(\rho)\text{ (bit) }$ &0.83&0.88(7)\\
   \hline
    \hline
  \end{tabular}}
  \label{fig:subfig} 
\end{minipage}

\begin{minipage}[b]{.235\linewidth}
  \centering
  \subfigure [Real part of $\rho$ for $R_4$]{
    \includegraphics[width=\linewidth]{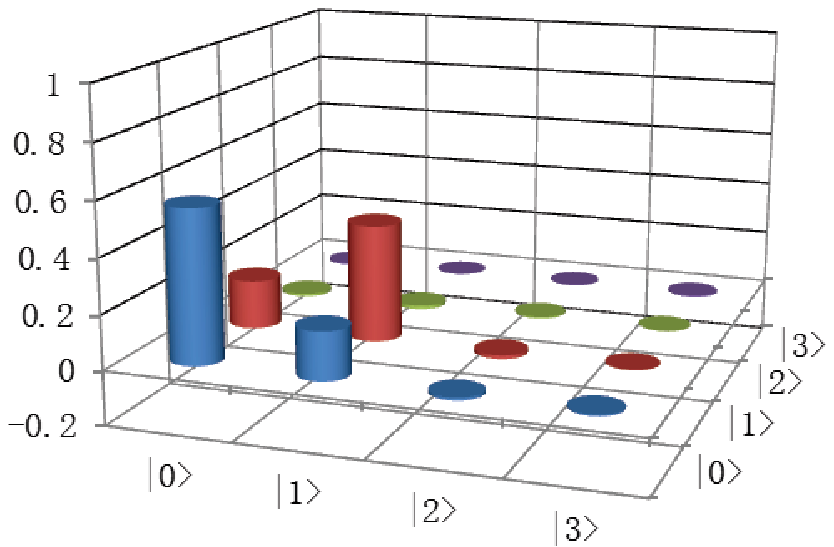}}
\end{minipage}
\begin{minipage}[b]{.235\linewidth}
  \subfigure[Imaginary part of $\rho$ for $R_4$]{
    \includegraphics[width=\linewidth]{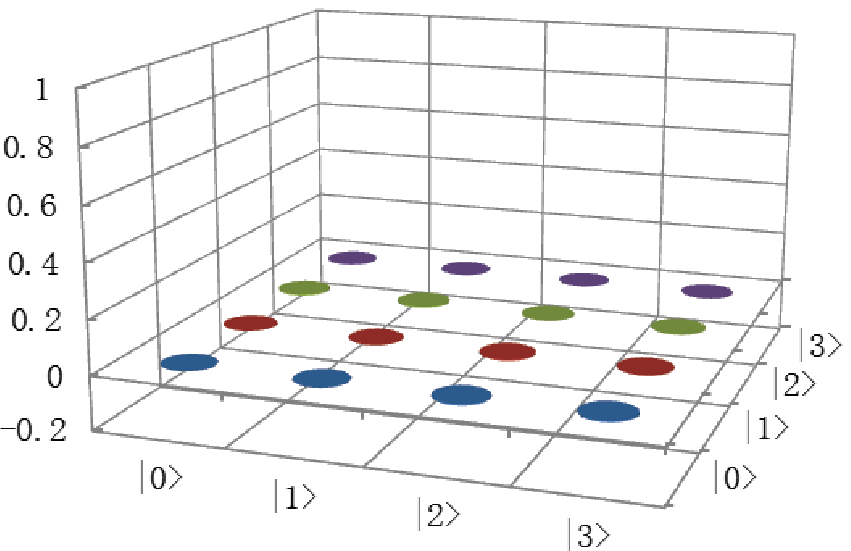}}
\end{minipage}
\begin{minipage}[b]{.21\linewidth}
  \subfigure[Distribution of $M$ for $R_4$]{
    \includegraphics[width=\linewidth]{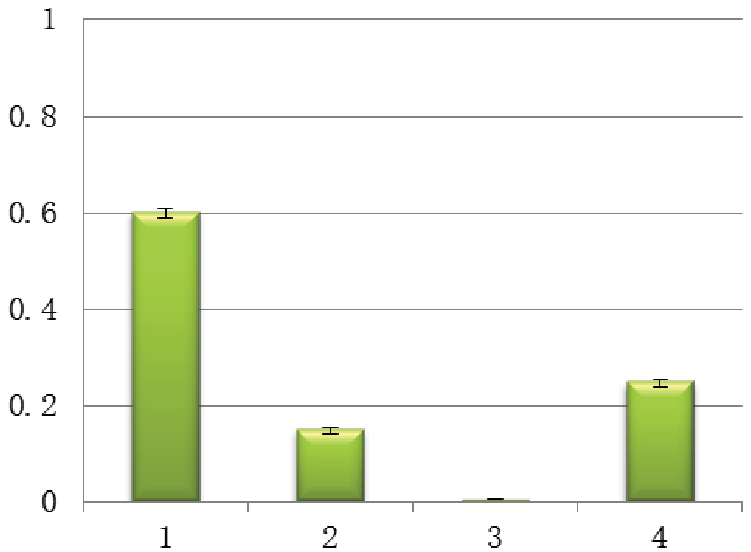}}
\end{minipage}
\begin{minipage}[b]{.3\linewidth}
  \subfigure[Results for $R_4$]{\begin{tabular}[b]{ ccc}
    \hline \hline
          &  Value$_{th}$ & Value$_{exp}$  \\
    \hline
$w_d^{(c)} $ &  5.21 &   5.08(10)\\
$H(M)\text{ (bit) }$ &1.36 & 1.38(4)\\
$w_d^{(q)}$   & 5.21 &5.16(24)\\
$S(\rho)\text{ (bit) }$ &0.89 &0.96(7)\\
   \hline
    \hline
  \end{tabular}}
  \label{fig:subfig} 
\end{minipage}

\caption{The experimental results of the state $\rho$ and the message $M$ for $I_3$, $I_4$ and $R_4$.}\label{g-4}
\end{figure*}

The right part of FIG.\ref{g-3} is the measurement device, which realizes the projection measurements of the state, by which the dimension witness and entropy for quantum and classical cases can be measured. The coincidence count of the two detectors is denoted by $D_{a,b}$ if the photons from port $a$ and $b$ are detected simultaneously. Similarly, $D_{a,d}$, $D_{c,b}$, and $D_{c,d}$ are the coincidence counts of the photons from the corresponding ports. For the quantum dimension witness, $P(-1|x,y)$ is obtained by $D_{a,b}$, and $P(+1|x,y)$ is obtained by $D_{a,d}$, $D_{c,b}$, and $D_{c,d}$. The projection state $|m_y{\rangle}$ is produced by rotating angles of two quarter wave plates and two half wave plates in the measurement device, which are denoted by $q_s^{(m)}$, $q_i^{(m)}$, $h_s^{(m)}$, and $h_i^{(m)}$, respectively. $|m_y{\rangle}$ can be expressed as
\begin{widetext}
\begin{align}
|m_y{\rangle}&=\frac{1}{2}[\cos{(2q_s^{(m)}-2h_s^{(m)})}+i\cos{2h_s^{(m)}}][\cos{(2q_i^{(m)}-2h_i^{(m)})}+i\cos{2h_i^{(m)}}]|0{\rangle}\nonumber\\
&+\frac{1}{2}[\sin{(2q_s^{(m)}-2h_s^{(m)})}+i\sin{2h_s^{(m)}}][\cos{(2q_i^{(m)}-2h_i^{(m)})}+i\cos{2h_i^{(m)}}]|1{\rangle}\nonumber\\
&+\frac{1}{2}[\sin{(2q_s^{(m)}-2h_s^{(m)})}+i\sin{2h_s^{(m)}}][\sin{(2q_i^{(m)}-2h_i^{(m)})}+i\sin{2h_i^{(m)}}]|2{\rangle}\nonumber\\
&+\frac{1}{2}[\cos{(2q_s^{(m)}-2h_s^{(m)})}+i\cos{2h_s^{(m)}}][\sin{(2q_i^{(m)}-2h_i^{(m)})}+i\sin{2h_i^{(m)}}]|3{\rangle}_i\label{state-m}
\end{align}
\end{widetext}
For the case of $I_3$, only the first three terms are used.

For the measurement of quantum entropy witness, the states are reconstructed by the quantum state tomography\cite{james2001pra,thew2002pra,bogdanov2004pra} which is realized by detect-events of $D_ {a,b}$ under different projection states. The details about the rotation angles of $h_s^{(m)}$, $q_s^{(m)}$, $h_i^{(m)}$, and $q_i^{(m)}$ for the quantum dimension witness and entropy are shown in Sec.C of Supplementary Materials. For the classical dimension witness and entropy, the angles of $h_s^{(m)}$, $q_s^{(m)}$, $h_i^{(m)}$, and $q_i^{(m)}$ are all set to $0^{\circ}$. The measurement settings are reduced to an arrangement that each coincidence count indicates a specific basis state, i.e., $|0{\rangle}{\rightarrow}D_{a,b}$, $|1{\rangle}{\rightarrow}D_{c,b}$, $|2{\rangle}{\rightarrow}D_{c,d}$, and $|3{\rangle}{\rightarrow}D_{a,d}$.

In the experiment, both the signal and idler photon count rates are about 19 kHz. The coincidence count rate is about 900 $s^{-1}$. The generation rate of the photon pairs is a little less than 0.01/pulse and the coincidence and accidence ratio (CAR) is higher than 100. Both the collection efficiencies of the signal and idler photons are about 5$\%$, including the optical losses and the detector efficiencies. The time window of the coincidence counting is 300 ps. For the quantum case, each counting time of an event $E_{xy}$ is 30s. For the classical case, the total counting time of each event $E_{xy}$ is 30s, and the counting time for each strategy $\lambda$ is $q_\lambda\times$30s.

The experimental results are shown in FIG.\ref{g-4}. FIG.\ref{g-4}(a) and FIG.\ref{g-4}(b) are the real and imaginary parts of the density matrix $\rho$ for the case of $I_3$, which is reconstructed by the measurement of quantum state tomography. FIG.\ref{g-4}(c) is the measured distribution of message $M$ for the case of $I_3$. The quantum entropy $S(\rho)$ and the classical entropy $H(M)$ are calculated according to FIG.\ref{g-4}(a), (b) and (c) and shown in FIG.\ref{g-4}(d), with the experimental results of quantum and classical dimension witness ($w_d^{(q)}$ and $w_d^{(c)}$) for the case of $I_3$. The theoretical values of $S(\rho)$, $H(M)$, $w_d^{(q)}$ and $w_d^{(c)}$ are also listed in FIG.\ref{g-4}(d) for comparison. For the cases of $I_4$ and $R_4$, the corresponding results are shown in FIG.\ref{g-4}(e)$\sim$(h) and FIG.\ref{g-4}(i)$\sim$(l), respectively. The unideal factors in the experiment are analyzed. The errors of the experiment results are calculated and shown in FIG.\ref{g-4}(d), (h) and (l), considering the error sources of the limited angle precision of the polarization components, the imperfection of the polarization splitting and the propagated Poissonian counting statistics of the detection events. It can be seen that the experimental results agree well with the theoretical expectations, showing that the minimal quantum entropy are lower than the minimal classical entropy under given values of the dimension witness in all the cases.



{\it Discussion}{\textemdash}In the theoretical analysis we have proved that the use of a system with the dimension higher than $n$ is not helpful to reduce the minimal quantum entropy under given values of the linear dimension witness $\sum_{x=1}^n\sum_{y=1}^l\alpha_{xy}\tr(\rho_xM_y)$. An related question is that if the given value of the dimension witness can be obtained by a $d$-dimensional system, where $d<n$, whether the minimal value of $S(\rho)$ could also be obtained by the $d$-dimensional system? On the other hand, we have calculated the minimal classical entropy according to Eq.(11) in Ref.\cite{chaves2015prl} for the dimension witness of $I_3$, $I_4$, and $R_4$ in the theoretical part of this paper. However, for arbitrary linear dimension witness, could the minimal classical entropy be obtained by the same way? We find that the answers of two above questions are ``no''. We list counter-examples for them in Sec.D of Supplementary Materials. It can be expected that $\alpha_{xy}$ would determine whether they hold or not, however, the condition of $\alpha_{xy}$ to support them are not clear. It is an interesting open problem.



{\it Conclusion}{\textemdash}We propose and prove a theorem which claims that the minimal value of $S(\rho)$ under given values of the linear dimension witness $\sum_{x=1}^n\sum_{y=1}^l\alpha_{xy}\tr(\rho_xM_y)$ can be obtained in $\mathbb{C}^n$. This theorem is used to obtain the minimal quantum entropy for $I_3$, $I_4$ and $R_4$. With the minimal classical entropy indicated in Ref.\cite{chaves2015prl}, the differences between the minimal quantum and classical entropy are illustrated. Then we experimentally verify it by a telecom band biphoton system, in which the photon pair generation is based on the spontaneous four-wave-mixing in optical fibers and the single photon detections are based on SNSPDs. The qutrit and ququart are encoded on the polarizations of the photon pairs. The experimental results agree well with the theoretical values, demonstrating the reduction of communication entropy from classical to quantum system.


\begin{acknowledgments}
This work was supported by 973 Programs of China under Contract No. 2013CB328700 and 2011CBA00303, the National Natural Science Foundation of China under Contract No. 61575102, 91121022 and 61321004, Tsinghua University Initiative Scientific Research Program under Contract No. 20131089382. Strategic Priority Research Program (B) of the Chinese Academy of Sciences (XDB04020100).
\end{acknowledgments}


\newpage

\begin{center}
\bf{Supplementary Material:}
\end{center}


\subsection{A. \bf{Proof of the Theorem}}

\begin{lemma}

Let $M$ and $\rho$ be an observable and a density matrix, respectively, where $\rank(\rho)=2$. Then there exist two density matrices $\rho_0$, $\rho_1$ and two positive real numbers $\mu_0$, $\mu_1$, subject to
\begin{align}
&\mu_0+\mu_1=1 \label{lemma_1_1}\\
&\mu_0\rho_0+\mu_1\rho_1=\rho \label{lemma_1_2}\\
&\rank(\rho_0)=\rank(\rho_1)=1 \label{lemma_1_3}\\
&\tr(\rho_0M)=\tr(\rho_1M) \label{lemma_1_4}
\end{align}

\end{lemma}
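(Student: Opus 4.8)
The plan is to push everything down onto the two-dimensional support of $\rho$ and then read off the construction from elementary plane geometry. First I would observe that since $\mu_0,\mu_1>0$ and $\mu_0\rho_0+\mu_1\rho_1=\rho$ with all three operators positive semidefinite, any $|w\rangle$ annihilated by $\rho$ is also annihilated by $\rho_0$ and $\rho_1$ (each nonnegative term $\mu_j\langle w|\rho_j|w\rangle$ must vanish). Hence the supports of $\rho_0,\rho_1$ lie in $V:=\spann(\rho)$, which is two-dimensional because $\rank(\rho)=2$. Writing $P$ for the projector onto $V$ and $\tilde M:=PMP$, for every unit vector $|\phi\rangle\in V$ we have $\tr(|\phi\rangle\langle\phi|M)=\langle\phi|\tilde M|\phi\rangle$. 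So it suffices to find pure states $|\phi_0\rangle,|\phi_1\rangle\in V$ and weights $\mu_0,\mu_1>0$ with $\mu_0+\mu_1=1$ realizing \eqref{lemma_1_2} and \eqref{lemma_1_4}; the whole problem now lives in a single qubit space.

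Next I would diagonalize the Hermitian operator $\tilde M$ on $V$, with eigenvalues $a\le b$ and orthonormal eigenvectors $|u\rangle,|v\rangle$. If $a=b$ then $\tilde M$ is a multiple of the identity, \eqref{lemma_1_4} is automatic, and any rank-one eigendecomposition $\rho=p_0|e_0\rangle\langle e_0|+p_1|e_1\rangle\langle e_1|$ (with $p_0,p_1>0$ since $\rank(\rho)=2$) already satisfies \eqref{lemma_1_1}--\eqref{lemma_1_4}. Otherwise $a<b$, and I parametrize a pure state as $|\phi\rangle=\cos(\theta/2)|u\rangle+e^{i\chi}\sin(\theta/2)|v\rangle$, so that $\langle\phi|\tilde M|\phi\rangle=a\cos^2(\theta/2)+b\sin^2(\theta/2)$ depends only on $\theta$ and is strictly monotone on $[0,\pi]$. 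Therefore \eqref{lemma_1_4} is equivalent to requiring that $|\phi_0\rangle$ and $|\phi_1\rangle$ share a common polar angle $\theta$ and differ only in their azimuthal phases $\chi_0,\chi_1$.

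With a common $\theta$ both rank-one states have identical diagonal entries $\cos^2(\theta/2),\sin^2(\theta/2)$ in the $\{|u\rangle,|v\rangle\}$ basis, so matching the diagonal of $\rho$ forces $\cos^2(\theta/2)=\langle u|\rho|u\rangle=:\rho_{uu}$, fixing $\theta$. Since $\rank(\rho)=2$ the diagonal entries $\rho_{uu},\rho_{vv}$ are strictly positive, so $s:=\sqrt{\rho_{uu}\rho_{vv}}>0$, and matching the off-diagonal entry $\rho_{uv}:=\langle u|\rho|v\rangle$ reduces \eqref{lemma_1_2} to the single complex equation
\[
\mu_0e^{-i\chi_0}+\mu_1e^{-i\chi_1}=\frac{\rho_{uv}}{s}.
\]
Positivity together with $\rank(\rho)=2$ gives $\det\rho=\rho_{uu}\rho_{vv}-|\rho_{uv}|^2>0$, hence $|\rho_{uv}/s|<1$: the right-hand side is a point strictly inside the unit disk, and any such point is a nontrivial convex combination of the two distinct points $e^{-i\chi_0},e^{-i\chi_1}$ where any chord through it meets the unit circle. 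Picking such a chord yields $\chi_0,\chi_1$ and weights $\mu_0,\mu_1\in(0,1)$ with $\mu_0+\mu_1=1$, and the associated $|\phi_0\rangle,|\phi_1\rangle$ satisfy all four conditions by construction.

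The only real content, and the step I would treat most carefully, is the strict inequality $|\rho_{uv}/s|<1$: it is precisely what guarantees the weights are \emph{strictly} positive, since a boundary point of the disk would collapse the decomposition onto a single pure state. Everything else is bookkeeping: the reduction to $V$, the fact that equal $\tilde M$-expectation means equal polar angle, and the elementary observation that an interior point of a disk lies on a chord. Geometrically the argument simply says that the Bloch vector of $\rho$, lying strictly inside the sphere, can always be placed on a chord whose two endpoints are pure states of equal $\tilde M$-expectation.
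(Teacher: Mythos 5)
Your proof is correct, and it takes a genuinely different route from the paper's. The paper works in the eigenbasis of $\rho$ (so $\rho=\lambda_0|0\rangle\langle 0|+\lambda_1|1\rangle\langle 1|$), defines $f(\theta)=m_{00}\cos^2\theta+m_{11}\sin^2\theta+(m_{10}+m_{01})\cos\theta\sin\theta-\tr(\rho M)$, uses the intermediate value theorem to find two \emph{real} superpositions $\cos\theta_j|0\rangle+\sin\theta_j|1\rangle$ with $M$-expectation equal to $\tr(\rho M)$, and then verifies by explicit trigonometric identities that a suitable convex combination of them reproduces $\rho$. You instead work in the eigenbasis of the compressed observable $\tilde M=PMP$ on the support $V$ of $\rho$: there the $\tilde M$-expectation of a pure state depends only on the polar angle $\theta$, so condition \eqref{lemma_1_4} pins down $\theta$ via the diagonal of $\rho$, and condition \eqref{lemma_1_2} collapses to writing the interior point $\rho_{uv}/\sqrt{\rho_{uu}\rho_{vv}}$ of the unit disk as a strict convex combination of two points on the unit circle (any chord through it), with $\rank(\rho)=2$ giving exactly the strict interiority $|\rho_{uv}|^2<\rho_{uu}\rho_{vv}$ needed for $\mu_0,\mu_1>0$. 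Your version replaces the paper's IVT-plus-identity-checking with a transparent Bloch-sphere/chord picture and makes visible precisely where the rank hypothesis enters; the paper's version has the advantage that its choice of basis (eigenbasis of $\rho$, two distinguished diagonal entries of $M$) is the one that carries over verbatim to the rank-$n$ peeling argument of its Lemma 2, and it produces the common expectation value $\tr(\rho M)$ by construction rather than as a consequence of \eqref{lemma_1_1}, \eqref{lemma_1_2} and \eqref{lemma_1_4} (though, as in the paper's Lemma 3, that consequence is immediate, so nothing is lost).
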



\begin{proof}

Since the density matrix $\rho$ is an Hermitian matrix, it can be represented by a diagonal matrix $\Lambda$ under a specific complete orthogonal basis. Let the complete orthogonal basis and the diagonal matrix $\Lambda$ be $\{|0{\rangle},|1{\rangle},|2{\rangle},\dots\}$ and $\diag\{\lambda_0,\lambda_1,\lambda_2,\ldots\}$, respectively. Since $\rank(\rho)=2$, without loss of generality, let $\lambda_{k}=0 \text{ while } k{\geq}2$. Then $\lambda_{0}>0$ and $\lambda_1>0$. Hence, the density matrix $\rho$ can be written as $\rho=\lambda_0|0{\rangle}{\langle}0|+\lambda_1|1{\rangle}{\langle}1|$. The observable $M$ can be written as $M={\sum_{k=0}}{\sum_{t=0}}{m_{kt}|k{\rangle}{\langle}t|}$. Without loss of generality, let $m_{00}{\geq}m_{11}$.\\


{\noindent}Case 1. \ \  $m_{00}=m_{11}$\\

Let
\begin{align}
&\rho_0=|0{\rangle}{\langle}0|\\
&\rho_1=|1{\rangle}{\langle}1|\\
&\mu_0=\lambda_0\\
&\mu_1=\lambda_1
\end{align}

Then Eq.(\ref{lemma_1_1}) holds since the trace of the density matrix $\rho$ is 1. Eq.(\ref{lemma_1_2})$\thicksim$(\ref{lemma_1_3}) hold clearly. Eq.(\ref{lemma_1_4}) holds since $\tr(\rho_0M)=m_{00}=m_{11}=\tr(\rho_1M)$.\\


{\noindent}Case 2. \ \  $m_{00}>m_{11}$ \\

Let's define a function
\begin{align}
f(\theta)=m_{00}\cos^2{\theta}+m_{11}\sin^2{\theta}+(m_{10}+m_{01})\cos{\theta}\sin{\theta}-m_{00}\lambda_0-m_{11}\lambda_1
\end{align}
Since $m_{00}>m_{11}$ and $\lambda_k>0\text{ while }k{\in}\{0,1\}$, $f(0)=\lambda_1(m_{00}-m_{11})>0$ and $f(\frac{\pi}{2})=f(-\frac{\pi}{2})=-\lambda_0(m_{00}-m_{11})<0$. Since $f(\theta)$ is a continuous function, by the intermediate value theorem there exist $\theta_1\in(0,\frac{\pi}{2})$ and $\theta_2\in(-\frac{\pi}{2},0)$ such that $f(\theta_1)=f(\theta_2)=0$.

Then let

\begin{align}
&\rho_0=(\cos\theta_1|0{\rangle}+\sin\theta_1|1{\rangle})(\cos\theta_1{\langle}0|+\sin\theta_1{\langle}1|)\\
&\rho_1=(\cos\theta_2|0{\rangle}+\sin\theta_2|1{\rangle})(\cos\theta_2{\langle}0|+\sin\theta_2{\langle}1|)\\
&\mu_0=\frac{-\sin{2\theta_2}}{{\sin{2\theta_1}-\sin{2\theta_2}}}\\
&\mu_1=\frac{\sin{2\theta_1}}{{\sin{2\theta_1}-\sin{2\theta_2}}}
\end{align}

Since $2\theta_1\in(0,\pi)$ and $2\theta_1\in(-\pi,0)$, $\sin{2\theta_1}>0$, and $\sin2\theta_2<0$. It follows that $\mu_0>0$ and $\mu_1>0$. Furthermore, both Eq.(\ref{lemma_1_1}) and Eq.(\ref{lemma_1_3}) hold clearly. Eq.(\ref{lemma_1_4}) also holds since $\tr(\rho_0M)=f(\theta_1)+\lambda_0m_{00}+\lambda_1m_{11}=\lambda_0m_{00}+\lambda_1m_{11}=f(\theta_2)+\lambda_0m_{00}+\lambda_1m_{11}=\tr(\rho_1M)$.

Consider that
\begin{align}
0&=\mu_0\cdot0+\mu_1\cdot0\nonumber\\
&=\mu_0f(\theta_1)+\mu_1f(\theta_2)\nonumber\\
&=(\mu_0\cos^2\theta_1+\mu_1\cos^2\theta_2-\lambda_0)m_{00}+(\mu_0\sin^2\theta_1+\mu_1\sin^2\theta_2-\lambda_1)m_{11}\nonumber\\
&=(\mu_0\cos^2\theta_1+\mu_1\cos^2\theta_2-\lambda_0)m_{00}+[(1-\mu_0\cos^2\theta_1-\mu_1\cos^2\theta_2)-(1-\lambda_0)]m_{11}\nonumber\\
&=(\mu_0\cos^2\theta_1+\mu_1\cos^2\theta_2-\lambda_0)(m_{00}-m_{11})\nonumber\\
&=(\frac{-\sin{2\theta_2}\cos^2{\theta_1}+\sin{2\theta_1}\cos^2{\theta_2}}{{\sin{2\theta_1}-\sin{2\theta_2}}}-\lambda_0)(m_{00}-m_{11})\label{lemma_1_asistant_2}
\end{align}

Then
\begin{align}
\frac{-\sin{2\theta_2}\cos^2{\theta_1}+\sin{2\theta_1}\cos^2{\theta_2}}{{\sin{2\theta_1}-\sin{2\theta_2}}}=\lambda_0\label{lemma_1_asistant_11}\\  \frac{-\sin{2\theta_2}\sin^2{\theta_1}+\sin{2\theta_1}\sin^2{\theta_2}}{{\sin{2\theta_1}-\sin{2\theta_2}}}=\lambda_1\label{lemma_1_asistant_12}
\end{align}

Hence
\begin{align}
\mu_0\rho_0+\mu_1\rho_1&=\mu_0(\cos\theta_1|0{\rangle}+\sin\theta_1|1{\rangle})(\cos\theta_1{\langle}0|+\sin\theta_1{\langle}1|)+\mu_1(\cos\theta_2|0{\rangle}+\sin\theta_2|1{\rangle})(\cos\theta_2{\langle}0|+\sin\theta_2{\langle}1|)\nonumber\\ &=\frac{-\sin{2\theta_2}\cos^2{\theta_1}+\sin{2\theta_1}\cos^2{\theta_2}}{{\sin{2\theta_1}-\sin{2\theta_2}}}|0{\rangle}{\langle}0|+\frac{-\sin{2\theta_2}\sin^2{\theta_1}+\sin{2\theta_1}\sin^2{\theta_2}}{{\sin{2\theta_1}-\sin{2\theta_2}}}|1{\rangle}{\langle}1| \label{lemma_1_asistant_1}
\end{align}

Since Eq.(\ref{lemma_1_asistant_11})$\sim$(\ref{lemma_1_asistant_1}), Eq.(\ref{lemma_1_2}) holds.
\end{proof}


\begin{lemma} Let $M$ and $\rho$ be an observable and a density matrix, respectively, where $\rank(\rho)=n>2$. Then there exist three density matrices $\rho_0$, $\rho_1$, $\rho'$ and three positive real numbers $\mu_0$, $\mu_1$ and $\mu'$, subject to

\begin{align}
&\mu_0+\mu_1+\mu'=1 \label{lemma_2_1}\\
&\mu_0\rho_0+\mu_1\rho_1+\mu'\rho'=\rho \label{lemma_2_2}\\
&\rank(\rho_0)=\rank(\rho_1)=1 \label{lemma_2_3}\\
&\tr(\rho_0M)=\tr(\rho_1M)=\tr(\rho{M}) \label{lemma_2_5}\\
&\rank(\rho')<\rank(\rho) \label{lemma_2_4}
\end{align}

\end{lemma}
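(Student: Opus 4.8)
The plan is to peel a two-dimensional block off $\rho$ whose expectation value already matches $\tr(\rho M)$, and to run Lemma~1 on that block so as to reduce the rank-$n$ case to the rank-$2$ case. First I would diagonalize $\rho=\sum_{k=0}^{n-1}\lambda_k|k\rangle\langle k|$ with all $\lambda_k>0$ (since $\rank(\rho)=n$), write $m_{kt}=\langle k|M|t\rangle$, and fix the target $\bar m=\tr(\rho M)=\sum_k\lambda_k m_{kk}$. As $\bar m$ is a convex combination of the diagonal entries $m_{kk}$ with strictly positive weights, it satisfies $\min_k m_{kk}\leq\bar m\leq\max_k m_{kk}$, with equality only when all the $m_{kk}$ coincide. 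That degenerate case is immediate: if every $m_{kk}=\bar m$, take $\rho_0=|0\rangle\langle 0|$, $\rho_1=|1\rangle\langle 1|$, $\mu_0=\lambda_0$, $\mu_1=\lambda_1$, $\mu'=\sum_{k\geq 2}\lambda_k>0$, and $\rho'=\frac{1}{\mu'}\sum_{k\geq 2}\lambda_k|k\rangle\langle k|$ of rank $n-2<n$; then Eqs.(\ref{lemma_2_1})--(\ref{lemma_2_4}) all hold directly.

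For the generic case I would let $p$ be an index attaining $\max_k m_{kk}$ and $q$ one attaining $\min_k m_{kk}$, so that $m_{qq}<\bar m<m_{pp}$ strictly. Setting $t=(\bar m-m_{qq})/(m_{pp}-m_{qq})\in(0,1)$ produces the genuinely rank-$2$ auxiliary state $\sigma=t|p\rangle\langle p|+(1-t)|q\rangle\langle q|$ with $\tr(\sigma M)=t\,m_{pp}+(1-t)\,m_{qq}=\bar m$. Applying Lemma~1 to $\sigma$ and $M$ gives rank-$1$ states $\rho_0,\rho_1$ and weights $\mu_0^\sigma,\mu_1^\sigma>0$ with $\mu_0^\sigma+\mu_1^\sigma=1$, $\mu_0^\sigma\rho_0+\mu_1^\sigma\rho_1=\sigma$, and $\tr(\rho_0 M)=\tr(\rho_1 M)$. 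Pairing the decomposition of $\sigma$ with $M$ then forces this common value to be $\tr(\sigma M)=\bar m$, so $\tr(\rho_0 M)=\tr(\rho_1 M)=\tr(\rho M)$, which is exactly Eq.(\ref{lemma_2_5}). This is where Lemma~1 does the real work.

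The last step is to subtract the largest admissible multiple of $\sigma$ from $\rho$. Because $\rho$ and $\sigma$ are simultaneously diagonal in $\{|k\rangle\}$, the matrix $\rho-s\sigma$ differs from $\rho$ only in its $p$- and $q$-entries, namely $\lambda_p-st$ and $\lambda_q-s(1-t)$. Taking $s^\star=\min\{\lambda_p/t,\;\lambda_q/(1-t)\}>0$ drives one of these entries to zero while keeping the other nonnegative, so $\rho-s^\star\sigma$ is positive semidefinite of rank at most $n-1$. I would then set $\mu'=1-s^\star$, $\rho'=(\rho-s^\star\sigma)/\mu'$, $\mu_0=s^\star\mu_0^\sigma$, and $\mu_1=s^\star\mu_1^\sigma$, which yields $\mu_0\rho_0+\mu_1\rho_1+\mu'\rho'=s^\star\sigma+(\rho-s^\star\sigma)=\rho$ and $\mu_0+\mu_1+\mu'=1$, establishing the convex-combination identities Eqs.(\ref{lemma_2_1}), (\ref{lemma_2_2}) and (\ref{lemma_2_4}); the rank-one condition Eq.(\ref{lemma_2_3}) and the expectation condition Eq.(\ref{lemma_2_5}) are already delivered by Lemma~1.

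The main obstacle is the bookkeeping in this final step: I must confirm that $s^\star\in(0,1)$, so that all three weights are strictly positive, and that the rank strictly decreases. Positivity of $s^\star$ is clear, while $s^\star<1$ follows from $\tr(\rho-s^\star\sigma)=1-s^\star\geq 0$, where equality would force $\rho=\sigma$, impossible since $\rank(\rho)=n>2=\rank(\sigma)$. The rank drop is secured exactly by choosing $s^\star$ to be the minimal of the two ratios; that single design choice is the only delicate point, and the remaining identities are routine.
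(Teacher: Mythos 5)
Your proof is correct, and it is organized differently from the paper's. The paper does not reduce Lemma~2 to Lemma~1: it reruns the intermediate-value construction from scratch, defining $F(\theta)=m_{ii}\cos^2\theta+m_{jj}\sin^2\theta+(m_{ij}+m_{ji})\cos\theta\sin\theta-\sum_k\lambda_k m_{kk}$ on the span of the eigenvectors $|i\rangle,|j\rangle$ attaining the extremal diagonal entries, extracting two roots $\theta_1,\theta_2$, and then splitting into two subcases (according to which of $\lambda_i$ or $\lambda_j$ is exhausted first) to write down $\mu_0,\mu_1,\mu'$ and $\rho'$ explicitly. You instead package the same geometric idea as a genuine reduction: since $\bar m=\tr(\rho M)$ lies strictly between $m_{qq}$ and $m_{pp}$ in the generic case, you build the auxiliary rank-2 state $\sigma=t|p\rangle\langle p|+(1-t)|q\rangle\langle q|$ with $\tr(\sigma M)=\bar m$, invoke Lemma~1 on $\sigma$ to get the two rank-1 states with equal (hence correct, by convexity) expectations, and peel off the largest multiple $s^\star=\min\{\lambda_p/t,\lambda_q/(1-t)\}$ of $\sigma$ from $\rho$; your $s^\star$ plays exactly the role of the paper's case split 2.1 versus 2.2. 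All the delicate points are handled: $s^\star>0$ is clear, $s^\star<1$ follows from the rank mismatch between $\rho$ and $\sigma$, and the rank of $\rho-s^\star\sigma$ drops because one of the two affected diagonal entries is driven to zero while the matrix stays diagonal. Your version is shorter and more modular, and it makes the subsequent iteration in Lemma~3 cleaner; the paper's version has the minor advantage of producing fully explicit closed-form weights and residual state without reference to the internals of Lemma~1.
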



\begin{proof} Since the density matrix $\rho$ is an Hermitian matrix, it can be represented by a diagonal matrix $\Lambda$ under a specific complete orthogonal basis. Let the complete orthogonal basis be $\{|0{\rangle},|1{\rangle},|2{\rangle},\dots,|n-2{\rangle},|n-1{\rangle},|n{\rangle},\ldots\}$ and the diagonal matrix $\Lambda$ be
$\diag\{\lambda_0,\lambda_1,\lambda_2,\ldots,\lambda_{n-2},\lambda_{n-1},\lambda_{n},\ldots\}$. Since $\rank(\rho)=n$, without loss of generality, let $\lambda_{k}=0 \text{ while } k{\geq}n$. Then $\lambda_{k}>0 \text{ while } 0{\leq}k{\leq}{n-1}$. Hence, the density matrix $\rho$ can be written as $\rho={\sum_{k=0}^{n-1}}{\lambda_k|k{\rangle}{\langle}k|}$. The observable $M$ can be written as $M={\sum_{k=0}}{\sum_{t=0}}{m_{kt}|k{\rangle}{\langle}t|}$. Let $m_{ii}$ and $m_{jj} $ be the maximum and minimum among the first $n$ diagonal elements of the matrix of $M$, hence
\begin{eqnarray}
m_{ii}=\max\{m_{00},\ldots,{m_{n-1n-1}}\} \label{lemma_1_max}\\
m_{jj}=\min\{m_{00},\ldots,{m_{n-1n-1}}\} \label{lemma_1_min}
\end{eqnarray}
It follows that
\begin{eqnarray}
m_{ii}{\geq}m_{jj}
\end{eqnarray}


{\noindent}Case 1. \ \  $m_{ii}=m_{jj}$\\

Thus $m_{00}=m_{11}=...=m_{n-1n-1}$. Let
\begin{align}
&\rho_0=|i{\rangle}{\langle}i|\\
&\rho_1=|j{\rangle}{\langle}j|\\
&\rho'=\frac{1}{1-\lambda_i-\lambda_j}\sum_{\substack{k=0 \\ k{\neq}i \\ k{\neq}j}}^{n-1}{\lambda_k}|k{\rangle}{\langle}k|\\
&\mu_0=\lambda_i\\
&\mu_1=\lambda_j\\
&\mu'=1-\lambda_i-\lambda_j
\end{align}

Since $\rank(\rho){>}2$, $\mu_0$, $\mu_1$, and $\mu'$ are all positive real numbers. $\rho_0$, $\rho_1$, and $\rho'$ are density matrices since $\lambda_k{\geq}0$ and $\sum_{k=0,k{\neq}i,k{\neq}j}^{n-1}\lambda_k=1-\lambda_i-\lambda_j$. Eq.(\ref{lemma_2_1})$\thicksim$(\ref{lemma_2_3}) hold clearly. $\rank(\rho')=n-2<n=\rank(\rho)$, hence Eq.(\ref{lemma_2_4}) holds. $\tr(\rho_1M)=m_{ii}$, $\tr(\rho_2M)=m_{jj}=m_{ii}$ and $\tr(\rho{M})=\sum_{k=0}^{n-1}{\lambda_km_{kk}}=m_{ii}$ due to that $\sum_{k=0}^{n-1}\lambda_k=1$ and $m_{00}=m_{11}=\ldots=m_{n-2n-2}=m_{n-1n-1}$. It follows that Eq.(\ref{lemma_2_5}) holds.\\


{\noindent}Case 2. \ \  $m_{ii}>m_{jj}$ \\

Let's define a function
\begin{align}
F(\theta)=m_{ii}\cos^2{\theta}+m_{jj}\sin^2{\theta}+(m_{ji}+m_{ij})\cos{\theta}\sin{\theta}-\sum_{k=0}^{n-1}{m_{kk}\lambda_k}
\end{align}
Since $m_{ii}>m_{jj}$, $\lambda_k>0 \text{ while } 0{\leq}k{\leq}n-1$, and $\sum_{k=0}^{n-1}\lambda_k=1$, $F(0)=m_{ii}-\sum_{k=0}^{n-1}{\lambda_km_{kk}}>0$ and $F(\frac{\pi}{2})=F(-\frac{\pi}{2})=m_{jj}-\sum_{k=0}^{n-1}{\lambda_km_{kk}}<0$. Since $F(\theta)$ is a continuous function, by the intermediate value theorem there exist $\theta_1\in(0,\frac{\pi}{2})$ and $\theta_2\in(-\frac{\pi}{2},0)$ such that $F(\theta_1)=F(\theta_2)=0$.\\


{\noindent} Case 2.1
$({\sin{2\theta_1}\cos^2{\theta_2}-\sin{2\theta_2}\cos^2{\theta_1}})/{\lambda_i}>({\sin{2\theta_1}\sin^2{\theta_2}-\sin{2\theta_2}\sin^2{\theta_1}})/{\lambda_j}$
\\

Let \begin{align}
&\rho_0=(\cos\theta_1|i{\rangle}+\sin\theta_1|j{\rangle})(\cos\theta_1{\langle}i|+\sin\theta_1{\langle}j|)\\
&\rho_1=(\cos\theta_2|i{\rangle}+\sin\theta_2|j{\rangle})(\cos\theta_2{\langle}i|+\sin\theta_2{\langle}j|)\\
&\mu_0=\lambda_i\frac{-\sin{2\theta_2}}{{\sin{2\theta_1}\cos^2{\theta_2}-\sin{2\theta_2}\cos^2{\theta_1}}}\\
&\mu_1=\lambda_i\frac{\sin{2\theta_1}}{{\sin{2\theta_1}\cos^2{\theta_2}-\sin{2\theta_2}\cos^2{\theta_1}}}\\
&\mu'=1-\lambda_i\frac{\sin{2\theta_1}-\sin{2\theta_2}}{{\sin{2\theta_1}\cos^2{\theta_2}-\sin{2\theta_2}\cos^2{\theta_1}}}\\
&\rho'=\frac{1}{\mu'}[(\sum_{\substack{k=0 \\ k{\neq}i \\ k{\neq}j}}^{n-1}{\lambda_k}|k{\rangle}{\langle}k|)+(\lambda_j-\lambda_i\frac{{\sin{2\theta_1}\sin^2{\theta_2}-\sin{2\theta_2}\sin^2{\theta_1}}}{{\sin{2\theta_1}\cos^2{\theta_2}-\sin{2\theta_2}\cos^2{\theta_1}}})|j{\rangle}{\langle}j|]
\end{align}

Since $2\theta_1\in(0,\pi)$ and $2\theta_1\in(-\pi,0)$, $\sin{2\theta_1}>0$, and $\sin2\theta_2<0$. It follows that $\mu_0>0$ and $\mu_1>0$. Furthermore, $\mu'>0$ since
\begin{align}
\mu'&=1-\lambda_i\frac{\sin{2\theta_1}-\sin{2\theta_2}}{{\sin{2\theta_1}\cos^2{\theta_2}-\sin{2\theta_2}\cos^2{\theta_1}}}\nonumber\\
&=(\sum_{\substack{k=0\\k{\neq}i\\k{\neq}j}}^{n-1}{\lambda_k}+\lambda_j+\lambda_i)-\lambda_i\frac{\sin{2\theta_1}-\sin{2\theta_2}}{{\sin{2\theta_1}\cos^2{\theta_2}-\sin{2\theta_2}\cos^2{\theta_1}}}\nonumber\\
&=(\sum_{\substack{k=0\\k{\neq}i\\k{\neq}j}}^{n-1}{\lambda_k})+(\lambda_j-\lambda_i\frac{{\sin{2\theta_1}\sin^2{\theta_2}-\sin{2\theta_2}\sin^2{\theta_1}}}{{\sin{2\theta_1}\cos^2{\theta_2}-\sin{2\theta_2}\cos^2{\theta_1}}})\nonumber\\
&>\lambda_j-\lambda_i\frac{{\sin{2\theta_1}\sin^2{\theta_2}-\sin{2\theta_2}\sin^2{\theta_1}}}{{\sin{2\theta_1}\cos^2{\theta_2}-\sin{2\theta_2}\cos^2{\theta_1}}}\nonumber\\
&>0
\end{align}

Since $\rho'$ is a semi-positive definite Hermitian matrix and $\tr(\rho')={\mu'}/{\mu'}=1$, $\rho'$ is a density matrix. Eq.(\ref{lemma_2_1}) and Eq.(\ref{lemma_2_2}) hold clearly. $\rho_0$ and $\rho_1$ are rank-1 density matrices, then Eq.(\ref{lemma_2_3}) holds. Eq.(\ref{lemma_2_5}) also holds since $\tr(\rho_0M)=F(\theta_1)+\sum_{k=0}^{n-1}\lambda_km_{kk}=0+\tr({\rho}M)=F(\theta_2)+\sum_{k=0}^{n-1}\lambda_km_{kk}=\tr(\rho_1M)$. $\rank(\rho')<n$ since $\rho'$ doesn't have the term of $|i{\rangle}{\langle}i|$. Then Eq.(\ref{lemma_2_4}) holds. \\


{\noindent} Case 2.2
$({\sin{2\theta_1}\cos^2{\theta_2}-\sin{2\theta_2}\cos^2{\theta_1}})/{\lambda_i}{\leq}({\sin{2\theta_1}\sin^2{\theta_2}-\sin{2\theta_2}\sin^2{\theta_1}})/{\lambda_j}$\\

Let
\begin{align}
&\rho_0=(\cos\theta_1|i{\rangle}+\sin\theta_1|j{\rangle})(\cos\theta_1{\langle}i|+\sin\theta_1{\langle}j|)\\
&\rho_1=(\cos\theta_2|i{\rangle}+\sin\theta_2|j{\rangle})(\cos\theta_2{\langle}i|+\sin\theta_2{\langle}j|)\\
&\mu_0=\lambda_j\frac{-\sin{2\theta_2}}{{\sin{2\theta_1}\sin^2{\theta_2}-\sin{2\theta_2}\sin^2{\theta_1}}}\\
&\mu_1=\lambda_j\frac{\sin{2\theta_1}}{{\sin{2\theta_1}\sin^2{\theta_2}-\sin{2\theta_2}\sin^2{\theta_1}}}\\
&\mu'=1-\lambda_j\frac{\sin{2\theta_1}-\sin{2\theta_2}}{{\sin{2\theta_1}\sin^2{\theta_2}-\sin{2\theta_2}\sin^2{\theta_1}}}\\
&\rho'=\frac{1}{\mu'}[(\sum_{\substack{k=0 \\ k{\neq}i \\ k{\neq}j}}^{n-1}{\lambda_k}|k{\rangle}{\langle}k|)+(\lambda_i-\lambda_j\frac{{\sin{2\theta_1}\cos^2{\theta_2}-\sin{2\theta_2}\cos^2{\theta_1}}}{{\sin{2\theta_1}\sin^2{\theta_2}-\sin{2\theta_2}\sin^2{\theta_1}}})|i{\rangle}{\langle}i|]
\end{align}

Eq.(\ref{lemma_2_1})$\sim$(\ref{lemma_2_4}) hold by a proof similar to the Case 2.1.
\end{proof}


\begin{lemma} Let $M$ and $\rho$ be an observable and a density matrix, respectively. Then there exist density matrices $\{\rho_0,\ldots,\rho_{s-1}\}$ and positive real numbers $\{\nu_0,\ldots,\nu_{s-1}\}$, subject to
\begin{align}
&\sum_{k=0}^{s-1}\nu_k=1 \label{lemma_3_1}\\
&\sum_{k=0}^{s-1}\nu_k\rho_k=\rho \label{lemma_3_2}\\
&\rank(\rho_k)=1 \text{ while } 0{\leq}k{\leq}s-1 \label{lemma_3_3}\\
&\tr(\rho_k{M})=\tr(\rho{M}) \text{ while } 0{\leq}k{\leq}s-1 \label{lemma_3_4}
\end{align}

\end{lemma}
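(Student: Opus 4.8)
The plan is to argue by strong induction on $\rank(\rho)$, treating Lemmas 1 and 2 as the two engines: Lemma 1 supplies the base case and Lemma 2 supplies the step that lowers the rank by peeling off two rank-1 pieces. All of the genuine analytic work---the intermediate-value-theorem constructions of the splitting angles---is already contained in those two lemmas, so the remaining task is to package their outputs into a single decomposition while tracking one invariant, namely that \emph{every} rank-1 piece carries the common expectation value $\tr(\rho M)$.

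First I would settle the base cases. If $\rank(\rho)=1$, take $s=1$, $\nu_0=1$, $\rho_0=\rho$, and Eqs.~(\ref{lemma_3_1})$\sim$(\ref{lemma_3_4}) hold trivially. If $\rank(\rho)=2$, Lemma 1 provides $\rho_0,\rho_1,\mu_0,\mu_1$ obeying Eqs.~(\ref{lemma_1_1})$\sim$(\ref{lemma_1_4}); setting $s=2$ and $\nu_k=\mu_k$ gives Eqs.~(\ref{lemma_3_1})$\sim$(\ref{lemma_3_3}) at once, while Eq.~(\ref{lemma_3_4}) follows because linearity of the trace and $\mu_0+\mu_1=1$ force the common value $\tr(\rho_0 M)=\tr(\rho_1 M)$ to equal $\tr(\rho M)$.

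For the inductive step I would assume the statement for all density matrices of rank strictly less than $n$ and take $\rank(\rho)=n>2$. Lemma 2 then yields rank-1 matrices $\rho_0,\rho_1$, a density matrix $\rho'$ with $\rank(\rho')<n$, and positive weights $\mu_0,\mu_1,\mu'$ realizing Eqs.~(\ref{lemma_2_1})$\sim$(\ref{lemma_2_4}). The one point that must not be skipped is that $\rho'$ inherits the target expectation value: expanding $\tr(\rho M)=\mu_0\tr(\rho_0 M)+\mu_1\tr(\rho_1 M)+\mu'\tr(\rho' M)$ and using $\tr(\rho_0 M)=\tr(\rho_1 M)=\tr(\rho M)$ together with $\mu_0+\mu_1+\mu'=1$ gives $\mu'\tr(\rho' M)=\mu'\tr(\rho M)$, whence $\tr(\rho' M)=\tr(\rho M)$ since $\mu'>0$. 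Applying the induction hypothesis to $\rho'$ (with the same $M$) produces rank-1 density matrices $\{\sigma_0,\ldots,\sigma_{t-1}\}$ and positive weights $\{\eta_0,\ldots,\eta_{t-1}\}$ with $\sum_j\eta_j=1$, $\sum_j\eta_j\sigma_j=\rho'$, and $\tr(\sigma_j M)=\tr(\rho' M)=\tr(\rho M)$ for all $j$.

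It then remains to assemble the pieces. Substituting $\rho'=\sum_j\eta_j\sigma_j$ into the Lemma 2 decomposition gives
\begin{align}
\rho=\mu_0\rho_0+\mu_1\rho_1+\sum_{j=0}^{t-1}(\mu'\eta_j)\sigma_j,
\end{align}
so relabelling the $s=t+2$ rank-1 matrices as $\{\rho_0,\ldots,\rho_{s-1}\}$ and the weights $\{\mu_0,\mu_1,\mu'\eta_0,\ldots,\mu'\eta_{t-1}\}$ as $\{\nu_0,\ldots,\nu_{s-1}\}$ verifies Eqs.~(\ref{lemma_3_1})$\sim$(\ref{lemma_3_4}): the weights are positive and sum to $\mu_0+\mu_1+\mu'=1$, each matrix has rank one, and each carries the trace $\tr(\rho M)$. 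The main obstacle is not computational but conceptual: one must verify that the common expectation value is preserved by the remainder $\rho'$ at every stage, since otherwise the recursion on $\rho'$ would match $\tr(\rho' M)$ rather than $\tr(\rho M)$ and Eq.~(\ref{lemma_3_4}) would fail. Once that invariant is secured, the rest is linear bookkeeping.
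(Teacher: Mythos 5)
Your proof is correct and follows essentially the same route as the paper: both arguments are powered by Lemma 1 (rank $\leq 2$ base case) and repeated application of Lemma 2 to peel off rank-1 pieces while the remainder's rank strictly decreases, and both verify the key invariant $\tr(\rho'M)=\tr(\rho M)$ by linearity of the trace together with $\mu_0+\mu_1+\mu'=1$ and $\mu'>0$. The only difference is presentational — you phrase the iteration as strong induction on the rank, which makes the weight bookkeeping ($\nu_j=\mu'\eta_j$ versus the paper's explicit products $\mu_2\mu'$, $\mu_4\mu'\mu''$, \ldots) slightly cleaner, but the substance is identical.
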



\begin{proof}
{\noindent}Case 1. \ \  $\rank(\rho)=1$\\

Let $\rho_1=\rho$, $\nu_1=1$. Eq.(\ref{lemma_3_1})$\sim$(\ref{lemma_3_4}) hold.\\


{\noindent}Case 2. \ \  $\rank(\rho)=2$\\

Using Lemma 1, there exist $\rho_0$, $\rho_1$, $\mu_0$, and $\mu_1$ satisfying Eq.(\ref{lemma_1_1})$\sim$(\ref{lemma_1_4}). Then let $\nu_0=\mu_0$ and $\nu_1=\mu_1$, Eq.(\ref{lemma_3_1})$\sim$(\ref{lemma_3_3}) hold. Consider that
\begin{align}
\tr(\rho{M})=&\tr\{(\mu_0\rho_0+\mu_1\rho_1){M}\}\nonumber\\
=&\mu_0\tr(\rho_0{M})+\mu_1\tr(\rho_1{M})\nonumber\\
=&\mu_0\tr(\rho_0{M})+(1-\mu_0)\tr(\rho_1{M})\nonumber\\
=&\tr(\rho_0{M})
\end{align}

Then Eq.(\ref{lemma_3_4}) holds. \\


{\noindent}Case 3. \ \  $\rank(\rho)>2$\\

Using Lemma 2, there exist $\rho_0$, $\rho_1$, $\rho'$, $\mu_0$, $\mu_1$, and $\mu'$ satisfying Eq.(\ref{lemma_2_1})$\sim$(\ref{lemma_2_4}). If $\rank(\rho')$ is still larger than 2, using Lemma 2 again. There exist $\rho_2$, $\rho_3$, $\rho''$, $\mu_2$, $\mu_3$, and $\mu''$, subject to

\begin{align}
&\mu_2+\mu_3+\mu''=1 \label{lemma_3_6}\\
&\rho'=\mu_2\rho_2+\mu_3\rho_3+\mu''\rho'' \label{lemma_3_7}\\
&\rank(\rho_2)=\rank(\rho_3)=1 \label{lemma_3_8}\\
&\tr(\rho_2M)=\tr(\rho_3M)=\tr(\rho'{M}) \label{lemma_3_10}\\
&\rank(\rho'')<\rank(\rho') \label{lemma_3_9}
\end{align}

Repeat using Lemma 2 until $\rank(\rho{'}^{...}{'}){\leq}2$. This process takes finite times since the rank of a density matrix is a positive integer and $\rank(\rho)>\rank(\rho')>\rank(\rho'')>\ldots$. At last, since $\rank(\rho{'}^{...}{'}){\leq}2$, $\rho{'}^{...}{'}$ can be decomposed as the equations in Case 1 or Case 2. Then let $\nu_0=\mu_0$, $\nu_1=\mu_1$, $\nu_2=\mu_2\mu'$, $\nu_3=\mu_3\mu'$, $\nu_4=\mu_4\mu'\mu''$, $\nu_5=\mu_5\mu'\mu''$ and so on.

Considering Eq.(\ref{lemma_2_1}) and Eq.(\ref{lemma_3_6}), Eq.(\ref{lemma_3_1}) holds since

\begin{align}
1=&\mu_0+\mu_1+\mu'\nonumber\\
=&\mu_0+\mu_1+\mu'(\mu_2+\mu_3+\mu'')\nonumber\\
=&\mu_0+\mu_1+\mu'\mu_2+\mu'\mu_3+\mu'\mu''(\mu_4+\mu_5+\mu''')\nonumber\\
=&\nu_0+\nu_1+\nu_2+\nu_3+\nu_4+\nu_5+\ldots\label{lemma_3_asistant_1}
\end{align}

Since Eq.(\ref{lemma_2_2}) and Eq.(\ref{lemma_3_7}), Eq.(\ref{lemma_3_2}) can be derived using the method similar to Eq.(\ref{lemma_3_asistant_1}). Eq.(\ref{lemma_3_3}) holds clearly. \\

We notice that
\begin{align}
\tr(\rho'M)=&\tr(\frac{\rho-\mu_1\rho_1+\mu_2\rho_2}{\mu'}M)\nonumber\\
=&\frac{1}{\mu'}[\tr(\rho{M})-\mu_1\tr(\rho_1{M})-\mu_2\tr(\rho_2{M})]\nonumber\\
=&\frac{1}{\mu'}[1-\mu_1-\mu_2]\tr(\rho{M})\nonumber\\
=&\tr(\rho{M})\label{VV}
\end{align}
Similar to Eq.(\ref{VV}), it is easy to obtain that $\tr(\rho{M})=\tr(\rho'{M})=\tr(\rho''{M})=\ldots$. Then $\tr(\rho{M})=\tr(\rho_1{M})=\tr(\rho_2{M})=\tr(\rho_3{M})=\tr(\rho_4{M})=\ldots$, since Eq.(\ref{lemma_2_5}) and Eq.(\ref{lemma_3_10}). It follows that Eq.(\ref{lemma_3_4}) holds.

\end{proof}


\begin{theorem}
Given the value of a linear dimension witness $\sum_{x=1}^n\sum_{y=1}^l\alpha_{xy}\tr(\rho_xM_y)=w_d$, the minimum value of the Von Neummann entropy $S(\rho)$ where $\rho=({\rho_1+\ldots+\rho_n})/n$ can be obtained when $\rho_k(1{\leq}k{\leq}n)$ are all rank-1 and in $\mathbb{C}^n$.
\end{theorem}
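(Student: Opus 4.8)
The plan is to reduce an arbitrary configuration to the claimed canonical form in two independent moves: first make every $\rho_x$ rank-1 at fixed $w_d$ without raising $S(\rho)$, and then collapse the ambient Hilbert space down to the span of the $n$ resulting pure states, whose dimension is at most $n$. The entropy only ever sees the average state $\rho$, which is what makes these two moves essentially decoupled.

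For the rank-1 reduction the key observation is that each $\rho_x$ is contracted in the witness against a \emph{single} effective observable $\tilde M_x:=\sum_{y=1}^l\alpha_{xy}M_y$, since $\sum_y\alpha_{xy}\tr(\rho_xM_y)=\tr(\rho_x\tilde M_x)$. As $\tilde M_x$ is Hermitian, Lemma 3 applies and gives $\rho_x=\sum_k\nu_{x,k}\rho_{x,k}$ with each $\rho_{x,k}$ rank-1 and $\tr(\rho_{x,k}\tilde M_x)=\tr(\rho_x\tilde M_x)$, so replacing $\rho_x$ by any single $\rho_{x,k}$ leaves every term of $w_d$ unchanged. To keep the entropy under control I would invoke concavity of the von Neumann entropy: letting $\rho$ denote the current average and $\sigma_k=\rho+\tfrac1n(\rho_{x,k}-\rho_x)$ the average after the substitution, one has $\sum_k\nu_{x,k}\sigma_k=\rho$, whence $S(\rho)\ge\sum_k\nu_{x,k}S(\sigma_k)\ge\min_k S(\sigma_k)$, so some $k^\ast$ satisfies $S(\sigma_{k^\ast})\le S(\rho)$. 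Since $\tilde M_x$ depends only on the fixed measurements, this can be done for $x=1,\dots,n$ in turn, ending with $n$ pure states, the same witness value, and no increase in entropy.

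For the dimension reduction, set $V=\spann\{|\psi_x\rangle\}$ with $\dim V\le n$. All $\rho_x$, and hence $\rho$, are supported on $V$, and the extra eigenvalues of $\rho$ outside $V$ are $0$, so $S(\rho)$ is literally unchanged when $\rho$ is regarded as an operator on $V\hookrightarrow\mathbb{C}^n$. What remains is to reproduce the value $w_d$ with genuine $\pm1$ observables on $V$. Writing $\rho^{(y)}=\sum_x\alpha_{xy}\rho_x$ (also supported on $V$), the set of values $\sum_y\tr(\rho^{(y)}M_y')$ attainable over $\pm1$ observables $M_y'$ on $V$ is exactly the interval $[-W,W]$ with $W=\sum_y\sum_k|\lambda_{yk}|$: the endpoints are realized by the sign operators $M_y'=\pm\,\mathrm{sign}(\rho^{(y)})$ (the bound of Eq.(\ref{eq-1})), and every intermediate value follows from continuity of $M_y'\mapsto\tr(\rho^{(y)}M_y')$. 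Crucially $W$ depends only on the states and not on the ambient dimension, so the original value satisfies $w_d\in[-W,W]$, and the intermediate value theorem produces a family $\{M_y'\}$ on $V$ realizing exactly $w_d$.

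Combining the two moves, every configuration in any dimension is dominated by a rank-1 configuration in $\mathbb{C}^n$ with the same $w_d$ and entropy no larger; since the rank-1 states and $\pm1$ observables on $\mathbb{C}^n$ form compact sets and $S$ is continuous, the constrained minimum is attained there, which is the assertion. I expect the genuine obstacle to be the dimension-reduction step: preserving the entropy is immediate, but one must verify that contracting the measurements onto $V$ still leaves enough freedom to hit the \emph{exact} witness value, which is precisely why identifying the achievable interval $[-W,W]$ and applying continuity are essential rather than cosmetic; the rank-1 step, by contrast, is a direct application of Lemma 3 once the effective observable $\tilde M_x$ is recognized.
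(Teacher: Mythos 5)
Your rank-1 reduction is sound and is essentially the paper's argument: you contract each $\rho_x$ against the effective observable $\tilde M_x=\sum_y\alpha_{xy}M_y$ (the paper's $M^{(x)}$), invoke Lemma~3, and use concavity of $S$ to select one pure component per $x$ without raising the entropy or changing the witness; doing this sequentially rather than through the paper's product decomposition over all tuples $(k_1,\ldots,k_n)$ is a cosmetic difference. The span/Gram--Schmidt observation that the resulting pure states live in a subspace $V$ with $\dim V\leq n$ also matches the paper.

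The genuine gap is in your dimension-reduction step. You claim that the values of $\sum_y\tr(\rho^{(y)}M_y')$ attainable with $\pm1$ observables $M_y'$ on $V$ form the full interval $[-W,W]$, arguing by continuity from the sign-operator endpoints. But the set of $\pm1$-valued observables on a fixed finite-dimensional space is \emph{not connected}: it splits into components indexed by the signature (the number of $+1$ versus $-1$ eigenvalues), so the intermediate value theorem only gives you an interval within each component, and the union of these intervals can have gaps. Concretely, take $n=2$, $l=1$, $\alpha_{11}=\alpha_{21}=1$, $\rho_1=|0\rangle\langle0|$, $\rho_2=|1\rangle\langle1|$, so $\rho^{(1)}=I_2$ on $V=\mathbb{C}^2$ and $W=2$; a dichotomic $M_1$ on a larger ambient space can give $\tr(\rho^{(1)}M_1)=1$, yet on $V$ every dichotomic observable yields $\tr(I_2M')=\tr(M')\in\{-2,0,2\}$, so the value $1$ cannot be reproduced with these states on $V$. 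Your reduction therefore does not map every configuration to one on $V$ with the same witness value. (The theorem survives because a \emph{different} low-entropy configuration in $\mathbb{C}^n$ can realize the same $w_d$, but that requires an argument you have not given.) Note that the paper avoids this issue by never compressing the measurements: it keeps the original $M_y$ acting on the ambient space and only asserts that the optimal \emph{states} lie in an $n$-dimensional subspace, which is all its final optimization problem and the subsequent use of the bound $w_d\leq\sum_y\sum_k|\lambda_{yk}|$ require. If you want the stronger statement that the measurements can also be taken on $\mathbb{C}^n$, you need a different argument than connectivity of the observable set.
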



\begin{proof}

Let $M^{(x)}=\sum_{y=1}^{l}\alpha_{xy}M_{y}$, then the dimension witness is written as

\begin{align}
\sum_{x=1}^n\tr(\rho_xM^{(x)})=w_d \label{theorem_1}
\end{align}

Using Lemma 3, for $\rho_x$ and $M^{(x)}$, there exist density matrices $\{\rho_{x,0},\ldots,\rho_{x,s_x-1}\}$ and positive real numbers $\{\nu_{x,0},\ldots,\nu_{x,s_x-1}\}$, subject to
\begin{align}
&\sum_{k_x=0}^{s_x-1}\nu_{x,k_x}=1 \\
&\sum_{k_x=0}^{s_x-1}\nu_{x,k_x}\rho_{x,k_x}=\rho \\
&\rank(\rho_{x,k_x})=1 \text{ while } 0{\leq}k_x{\leq}{s_x-1} \\
&\tr(\rho_{x,k_x}{M^{(x)}})=\tr(\rho_x{M^{(x)}}) \text{ while } 0{\leq}k_x{\leq}{s_x-1}\label{WWWW}
\end{align}

Then
\begin{align}
&\sum_{k_1=0}^{s_1-1}\sum_{k_2=0}^{s_2-1}\ldots\sum_{k_n=0}^{s_n-1}\nu_{1,k_1}\nu_{2,k_2}\ldots\nu_{n,k_n}\nonumber\\
=&(\sum_{k_1=0}^{s_1-1}\nu_{1,k_1})(\sum_{k_2=0}^{s_2-1}\nu_{2,k_2})\ldots(\sum_{k_n=0}^{s_n-1}\nu_{n,k_n})\nonumber\\
=&1
\end{align}

Furthermore
\begin{align}
\rho=&\frac{1}{n}({\rho_1+\ldots+\rho_n}) \nonumber\\
=&\frac{1}{n}[(\sum_{k_1=0}^{s_1-1}\nu_{1,k_1}\rho_{1,k_1})+(\sum_{k_2=0}^{s_2-1}\nu_{2,k_2}\rho_{2,k_2})+\ldots+(\sum_{k_n=0}^{s_n-1}\nu_{n,k_n}\rho_{n,k_n})]\nonumber\\
=&\sum_{k_1=1}^{s_1}\sum_{k_2=1}^{s_2}\ldots\sum_{k_n=1}^{s_n}\nu_{1,k_1}\nu_{2,k_2}\ldots\nu_{n,k_n}[\frac{1}{n}({\rho_{1,k_1}+\ldots+\rho_{n,k_n}})]\label{theorem_2}
\end{align}

Since $({\rho_{1,k_1}+\ldots+\rho_{n,k_n}})/n$ is also a density matrix and Eq.(2.2) in the Page 237 of Ref.\cite{Wehrl1978}, it follows that
\begin{align}
S(\rho)=&S(\sum_{k_1=0}^{s_1-1}\sum_{k_2=0}^{s_2-1}\ldots\sum_{k_n=0}^{s_n-1}\nu_{1,k_1}\nu_{2,k_2}\ldots\nu_{n,k_n}[\frac{1}{n}({\rho_{1,k_1}+\ldots+\rho_{n,k_n}})])\nonumber\\
{\geq}&\sum_{k_1=0}^{s_1-1}\sum_{k_2=0}^{s_2-1}\ldots\sum_{k_n=0}^{s_n-1}\nu_{1,k_1}\nu_{2,k_2}\ldots\nu_{n,k_n}S(\frac{{\rho_{1,k_1}+\ldots+\rho_{n,k_n}}}{n})\nonumber\\
{\geq}&\min_{\substack{0{\leq}k_1{\leq}s_1-1 \\ \vdots \\ 0{\leq}k_n{\leq}s_n-1}}S(\frac{{\rho_{1,k_1}+\ldots+\rho_{n,k_n}}}{n})\nonumber\\
=&S(\frac{{\rho_{1,t_1}+\ldots+\rho_{n,t_n}}}{n})\label{theorem_3}
\end{align}

Where $t_1\in\{0,\ldots,s_1-1\}$, $\ldots$, $t_n\in\{0,\ldots,s_n-1\}$.

On the other hand, while $\rho_1=\rho_{1,t_1}$, $\ldots$, $\rho_n=\rho_{n,t_n}$, the equation of the linear dimension witness $\sum_{x=1}^n\tr(\rho_xM^{(x)})=w_d$ holds since Eq.(\ref{WWWW}).

Then considering that $\{\rho_{x,t_x}\}$ are rank-1 density matrices, they are written as
\begin{align}
\rho_{x,t_x}=|\Psi_x{\rangle}{\langle}\Psi_x| \text{ while } 1{\leq}x{\leq}n\label{theorem_4}
\end{align}

Let
\begin{align}
|\Psi_1'{\rangle}=&|\Psi_1{\rangle}\\
|\Psi_2'{\rangle}=&\frac{|\Psi_2{\rangle}-{\langle}\Psi_1'|\Psi_2{\rangle}|\Psi_1'{\rangle}}{\||\Psi_2{\rangle}-{\langle}\Psi_1'|\Psi_2{\rangle}|\Psi_1'{\rangle}\|}\\
|\Psi_3'{\rangle}=&\frac{|\Psi_3{\rangle}-{\langle}\Psi_1'|\Psi_3{\rangle}|\Psi_1'{\rangle}-{\langle}\Psi_2'|\Psi_3{\rangle}|\Psi_2'{\rangle}}{\||\Psi_3{\rangle}-{\langle}\Psi_1'|\Psi_3{\rangle}|\Psi_1'{\rangle}-{\langle}\Psi_2'|\Psi_3{\rangle}|\Psi_2'{\rangle}\|}\\
\ldots=&\ldots\nonumber\\
|\Psi_n'{\rangle}=&\frac{|\Psi_n{\rangle}-{\langle}\Psi_1'|\Psi_n{\rangle}|\Psi_1'{\rangle}-{\langle}\Psi_2'|\Psi_n{\rangle}|\Psi_2'{\rangle}-\ldots-{\langle}\Psi_{n-1}'|\Psi_n{\rangle}|\Psi_{n-1}'{\rangle}}{\||\Psi_n{\rangle}-{\langle}\Psi_1'|\Psi_n{\rangle}|\Psi_1'{\rangle}-{\langle}\Psi_2'|\Psi_n{\rangle}|\Psi_2'{\rangle}-\ldots-{\langle}\Psi_{n-1}'|\Psi_n{\rangle}|\Psi_{n-1}'{\rangle}\|}\label{theorem_5}
\end{align}

Then $\{|\Psi_1'{\rangle},\ldots,|\Psi_n'{\rangle}\}$ are orthogonal pairwise and $\{|\Psi_1{\rangle},\ldots,|\Psi_n{\rangle}\}$ are in the space $\Sigma=\spann\{|\Psi_1'{\rangle},\ldots,|\Psi_n'{\rangle}\}$.
Since Eq.(\ref{theorem_4}), $\rho_{1,t_1},\rho_{2,t_2},\ldots,\rho_{n,t_n}$ are all in the space $\Sigma$. Since dim$(\Sigma){\leq}n$, $\Sigma$ is included in $\mathbb{C}^n$.

Hence, given the value of a linear dimension witness $\sum_{x=1}^n\tr(\rho_xM^{(x)})=w_d$, for any density matrices $\{\rho_1,\ldots,\rho_n\}$, there exist density matrices $\{\rho_{1,t_1},\ldots,\rho_{1,t_n}\}$, subject to
\begin{align}
&\rank(\rho_{x,t_x})=1 \text{ while } 1{\leq}x{\leq}n\\
&\rho_{x,t_x}\in\mathbb{C}^n \text{ while } 1{\leq}x{\leq}n\\
&\sum_{x=1}^n\tr(\rho_{x,t_x}M^{(x)})=w_d\\
&S(\frac{\rho_{1,t_1}+\ldots+\rho_{n,t_n}}{n}){\leq}S(\frac{\rho_1+\ldots+\rho_n}{n})
\end{align}

Hence, given the value of a linear dimension witness $\sum_{x=1}^n\sum_{y=1}^l\alpha_{xy}\tr(\rho_xM_y)=w_d$, the minimal value of the Von Neummann entropy $S(\rho)$ where $\rho=({\rho_1+\ldots+\rho_n})/n$ is equal to
\begin{align}
&{\min}S(\frac{\rho_1+\ldots+\rho_n}{n})\nonumber\\
\text{ s.t. }&\nonumber\\
&\sum_{x=1}^n\sum_{y=1}^l\alpha_{xy}\tr(\rho_xM_y)=w_d\nonumber\\
&\rank(\rho_{x})=1 \text{ while } 1{\leq}x{\leq}n\nonumber\\
&\rho_x\in\mathbb{C}^n\text{ while }1{\leq}x{\leq}n\nonumber\\
\end{align}

\end{proof}

\newpage

\subsection{B. Details about the maximal differences between minimal values of $H(M)$ and $S(\rho)$ for $I_3$, $I_4$, and $R_4$.}

The states $\rho_x$, the measurements $M_y$, the deterministic expectation values $E_{m,y}^{(\lambda)}$, the deterministic probability distribution $P_{m,x}^{(\lambda)}$ and the efficiency matrix $A_{x,y}$ are written as

\begin{align}
\rho_x=&|\psi_x{\rangle}{\langle}\psi_x|\\
M_y=&1-2|m_y{\rangle}{\langle}m_y|\\
E_{m,y}^{(\lambda)}=&\begin{bmatrix}   E(m=0,y=1,\lambda) & \ldots &  E(m=n-1,y=1,\lambda) \\ & \vdots & \\   E(m=0,y=l,\lambda) & \ldots &  E(m=n-1,y=l,\lambda) \end{bmatrix}\\
P_{m,x}^{(\lambda)}=&\begin{bmatrix}   P(m=0|x=1,\lambda) & \ldots &  P(m=0|x=n,\lambda) \\ & \vdots & \\   P(m=n-1|x=1,\lambda) & \ldots &  P(m=n-1|x=n,\lambda) \end{bmatrix}\\
A_{xy}=&\begin{bmatrix} \alpha_{x=1,y=1} & \ldots & \alpha_{x=1,y=l} \\ & \vdots & \\ \alpha_{x=n,y=1} & \ldots & \alpha_{x=n,y=l} \end{bmatrix}
\end{align}

Here we notice that $P_{m,x}^{(\lambda)}$ has $n$ rows and $E_{m,y}^{(\lambda)}$ has $n$ columns, since the message $M$ with dimension $n$ is proved to be sufficient in Sec.III of Supplementary Material of Ref.\cite{chaves2015prl}. While $\rank\{P_{m,x}^{(\lambda)}\}$ is less than $n$, the dimension witness of $w_d$ can be obtained by a system with dimension lower than $n$ .

For the quantum entropy,
\begin{align}
S(\rho)=-\tr(\rho\log_2\rho) \text{ , where }\rho=\sum_{x=1}^n\rho_x/n\label{detail-1}
\end{align}

For the classical entropy,
\begin{align}
H(M)=-\sum_{m=0}^{n-1}p_m\log_2p_m \text{ ,  where }p_m=\sum_{x=1}^n\sum_\lambda{P(m|x,\lambda)}q_\lambda/n\label{detail-2}
\end{align}

For the quantum dimension witness,
\begin{align}
w_d^{(q)}=\sum_{x=1}^n\sum_{y=1}^l\alpha_{xy}\tr(\rho_xM_y)\label{detail-3}
\end{align}

For the classical dimension witness,
\begin{align}
w_d^{(c)}=\sum_{x=1}^n\sum_{y=1}^l\alpha_{xy}\sum_{m=0}^{n-1}\sum_{\lambda}E(y,m,\lambda)P(m|x,\lambda)q_\lambda=\sum_\lambda\tr\{A_{xy}E_{m,y}^{(\lambda)}P_{m,x}^{(\lambda)}\}q_\lambda\label{detail-4}
\end{align}

While accessing the values shown in TABLE.I of the main text, the details about the states $|\psi_x{\rangle}$, the projection states $|m_y{\rangle}$, the deterministic expectation values $E_{m,y}^{(\lambda)}$, the deterministic probability distribution $P_{m,x}^{(\lambda)}$ and the probability of strategies $q_\lambda$ are shown below.

\ \\

\noindent{\bf For the case of $I_3$}

\ \\
The efficiency matrix is
\begin{align}
A_{xy}=&\begin{bmatrix} 1 & 1 \\ 1 & -1 \\ -1 & 0 \end{bmatrix}\label{pre_i3}
\end{align}

The quantum states are
\begin{align}
|\psi_1{\rangle}&=(1,0,0)\label{i3-pre-begin}\\
|\psi_2{\rangle}&=(0.7972,0.6037,0)\\
|\psi_3{\rangle}&=(0.6511,-0.7590,0)\label{i3-pre-end}
\end{align}

The projection states are
\begin{align}
|m_1{\rangle}&=(0.4531,-0.8914,0)\label{i3-mea-begin}\\
|m_2{\rangle}&=(0.4451,0.8955,0)\label{i3-mea-end}
\end{align}

There are two classical strategies $\lambda_1$ and $\lambda_2$, their probabilities are
\begin{align}
q_{\lambda_1}&=0.3111\\
q_{\lambda_2}&=0.6889
\end{align}

The deterministic expectation values are
\begin{align}
E_{m,y}^{(\lambda_1)}&=E_{m,y}^{(\lambda_2)}=\begin{bmatrix} 1 & 1 &-1 \\ 1 & -1 & 1 \end{bmatrix}\label{i3-deter}
\end{align}

The deterministic probability distributions are
\begin{align}
P_{m,x}^{(\lambda_1)}&=\begin{bmatrix} 1 & 0 &0 \\ 0 & 1 & 0 \\ 0 & 0 & 1 \end{bmatrix}\label{pre-i3-before}\\
P_{m,x}^{(\lambda_2)}&=\begin{bmatrix} 1 & 0 &1 \\ 0 & 1 & 0 \\ 0 & 0 & 0 \end{bmatrix}\label{pre-i3}
\end{align}

Substitute Eq.(\ref{pre_i3})$\sim$(\ref{pre-i3}) into Eq.(\ref{detail-1})$\sim$(\ref{detail-4}),
\begin{align}
w_d^{(c)}=w_d^{(q)}=&3.622\nonumber\\
H(M)=&1.334\text{ bit}\nonumber\\
S(\rho)=&0.897\text{ bit}
\end{align}

\ \\

\noindent{\bf For the case of $I_4$}

\ \\

The efficiency matrix is
\begin{align}
A_{xy}=&\begin{bmatrix} 1 & 1 & 1 \\ 1 & 1 & -1 \\  1 & -1 & 0 \\ -1 & 0 & 0 \end{bmatrix}\label{pre_i4}
\end{align}

The quantum states are
\begin{align}
|\psi_1{\rangle}&=(1,0,0,0)\label{i4-pre-begin}\\
|\psi_2{\rangle}&=(0.8323,0.5543,0,0)\\
|\psi_3{\rangle}&=(0.3108,0.9505,0,0)\\
|\psi_4{\rangle}&=(0.7623,0.5247,0.2148,0.3121)\label{i4-pre-end}
\end{align}

The projection states are
\begin{align}
|m_1{\rangle}&=(0.1692,0.1164,0.5549,0.8062)\label{i4-mea-begin}\\
|m_2{\rangle}&=(0.0750,-0.9972,0,0)\\
|m_3{\rangle}&=(0.4721,0.8816,0,0)\label{i4-mea-end}
\end{align}

There are two classical strategies $\lambda_1$ and $\lambda_2$, their probabilities are
\begin{align}
q_{\lambda_1}&=0.3802\\
q_{\lambda_2}&=0.6198
\end{align}

The deterministic expectation values are
\begin{align}
E_{m,y}^{(\lambda_1)}&=E_{m,y}^{(\lambda_2)}=\begin{bmatrix} 1 & 1 & 1 &-1 \\ 1 & 1 & -1 & 1 \\ 1 & -1 & 1 & 1 \end{bmatrix}\label{i4-deter}
\end{align}

The deterministic probability distributions are
\begin{align}
P_{m,x}^{(\lambda_1)}&=\begin{bmatrix} 1 & 0 &0 &1 \\ 0 & 1 & 0 & 0\\ 0 & 0 & 1 & 0 \\ 0 & 0 & 0 & 0 \end{bmatrix}\label{pre-i4-before}\\
P_{m,x}^{(\lambda_2)}&=\begin{bmatrix} 1 & 0 &1 & 1 \\ 0 & 1 & 0 & 0\\ 0 & 0 & 0 & 0 \\ 0 & 0 & 0 & 0 \end{bmatrix}\label{pre-i4}
\end{align}

Substitute Eq.(\ref{pre_i4})$\sim$(\ref{pre-i4}) into Eq.(\ref{detail-1})$\sim$(\ref{detail-4})
\begin{align}
w_d^{(c)}=w_d^{(q)}=&5.760\nonumber\\
H(M)=&1.223\text{ bit}\nonumber\\
S(\rho)=&0.829\text{ bit}
\end{align}


\ \\

\noindent{\bf For the case of $R_4$}

\ \\

The efficiency matrix is
\begin{align}
A_{xy}=&\begin{bmatrix} 1 & 1 \\ 1 & -1 \\ -1 & 1 \\ -1 & -1 \end{bmatrix}\label{pre_r4}
\end{align}

The quantum states are
\begin{align}
|\psi_1{\rangle}&=(1,0,0,0)\label{r4-pre-begin}\\
|\psi_2{\rangle}&=(0.7588,0.2363-0.6070i,0,0)\\
|\psi_3{\rangle}&=(0.7588,0.2363+0.6070i,0,0)\\
|\psi_4{\rangle}&=(0.3893,0.9211,0,0)\label{r4-pre-end}
\end{align}

The projection states are
\begin{align}
|m_1{\rangle}&=(0.1515-0.3891i,0.9087,0,0)\label{r4-mea-begin}\\
|m_2{\rangle}&=(0.1515+0.3891i,0.9087,0,0)\label{r4-mea-end}
\end{align}

There are two classical strategies $\lambda_1$ and $\lambda_2$, their probabilities are
\begin{align}
q_{\lambda_1}&=0.6056\\
q_{\lambda_2}&=0.3944
\end{align}

The deterministic expectation values are
\begin{align}
E_{m,y}^{(\lambda_1)}=E_{m,y}^{(\lambda_2)}&=\begin{bmatrix} 1 & 1 & -1 &-1 \\ 1 & -1 & 1 & -1 \end{bmatrix}\label{r4-deter}
\end{align}

The deterministic probability distributions are
\begin{align}
P_{m,x}^{(\lambda_1)}&=\begin{bmatrix} 1 & 0 &1 &0 \\ 0 & 1 & 0 & 0\\ 0 & 0 & 0 & 0 \\ 0 & 0 & 0 & 1 \end{bmatrix}\label{pre-r4-before}\\
P_{m,x}^{(\lambda_2)}&=\begin{bmatrix} 1 & 1 &1 & 0 \\ 0 & 0 & 0 & 0\\ 0 & 0 & 0 & 0 \\ 0 & 0 & 0 & 1 \end{bmatrix}\label{pre-r4}
\end{align}

Substitute Eq.(\ref{pre_r4})$\sim$(\ref{pre-r4}) into Eq.(\ref{detail-1})$\sim$(\ref{detail-4})
\begin{align}
w_d^{(c)}=w_d^{(q)}=&5.211\nonumber\\
H(M)=&1.356\text{ bit}\nonumber\\
S(\rho)=&0.888\text{ bit}
\end{align}

\newpage

\subsection{C. The rotation angles of HWPs and QWPs}
\ \\
\noindent{\bf The preparation of quantum states}
\ \\

Following Eq.(7) in the main text and Eq.(\ref{i3-pre-begin})$\sim$Eq.(\ref{i3-pre-end}),
\begin{table}[!htbp]
\centering
\caption{The rotation angles of HWPs and QWP in the state preparator for quantum states in the case of $I_3$.}\label{t-1}
  \begin{tabular}{ cccc}
    \hline \hline
          & ${h_s^{(p)}}$ & $q_s^{(p)}$ & $h_i^{(p)}$ \\
    \hline    
    $|\psi_1{\rangle}$ & 0$^{\circ}$ & 0$^{\circ}$ & 0$^{\circ}$  \\
    $|\psi_2{\rangle}$ & 18.57$^{\circ}$ & 37.14$^{\circ}$ & 0$^{\circ}$  \\
    $|\psi_3{\rangle}$ & -24.69$^{\circ}$ & -49.38$^{\circ}$ & 0$^{\circ}$  \\
    \hline
    \hline
  \end{tabular}
\end{table}

Following Eq.(7) in the main text and Eq.(\ref{i4-pre-begin})$\sim$Eq.(\ref{i4-pre-end}),

\begin{table}[!htbp]
\centering
\caption{The rotation angles of HWPs and QWP in the state preparator for quantum states in the case of $I_4$.}\label{t-2}
  \begin{tabular}{ cccc}
    \hline \hline
          & ${h_s^{(p)}}$ & $q_s^{(p)}$ & $h_i^{(p)}$ \\
    \hline    
    $|\psi_1{\rangle}$ & 0$^{\circ}$ & 0$^{\circ}$ & 0$^{\circ}$  \\
    $|\psi_2{\rangle}$ & 16.83$^{\circ}$ & 33.66$^{\circ}$ & 0$^{\circ}$  \\
    $|\psi_3{\rangle}$ & 35.95$^{\circ}$ & 71.89$^{\circ}$ & 0$^{\circ}$  \\
    $|\psi_4{\rangle}$ & 17.27$^{\circ}$ & 34.54$^{\circ}$ & 11.13$^{\circ}$  \\
    \hline
    \hline
  \end{tabular}
\end{table}

Following Eq.(7) in the main text and Eq.(\ref{r4-pre-begin})$\sim$Eq.(\ref{r4-pre-end}),

\begin{table}[!htbp]
\centering
\caption{The rotation angles of HWPs and QWP in the state preparator for quantum states in the case of $R_4$.}\label{t-3}
  \begin{tabular}{ cccc}
    \hline \hline
          & ${h_s^{(p)}}$ & $q_s^{(p)}$ & $h_i^{(p)}$ \\
    \hline    
    $|\psi_1{\rangle}$ & 0$^{\circ}$ & 0$^{\circ}$ & 0$^{\circ}$  \\
    $|\psi_2{\rangle}$ & 33.55$^{\circ}$ & 33.55$^{\circ}$ & 0$^{\circ}$  \\
    $|\psi_3{\rangle}$ & 0$^{\circ}$ & 33.55$^{\circ}$ & 0$^{\circ}$  \\
    $|\psi_4{\rangle}$ & 33.55$^{\circ}$ & 67.09$^{\circ}$ & 0$^{\circ}$  \\
    \hline
    \hline
  \end{tabular}
\end{table}

\ \\
\noindent{\bf The preparation of classical states}
\ \\

Following Eq.(7) in the main text and Eq.(\ref{pre-i3-before})$\sim$Eq.(\ref{pre-i3}),

\begin{table}[!htbp]
\centering
\caption{The rotation angles of HWPs and QWP in the state preparator for classical states of strategy $\lambda_1$ in the case of $I_3$.}\label{t-4}
  \begin{tabular}{ cccc}
    \hline \hline
          & ${h_s^{(p)}}$ & $q_s^{(p)}$ & $h_i^{(p)}$ \\
    \hline    
    $\text{State 1}$ & 0$^{\circ}$ & 0$^{\circ}$ & 0$^{\circ}$  \\
    $\text{State 2}$ & 45$^{\circ}$ & 90$^{\circ}$ & 0$^{\circ}$  \\
    $\text{State 3}$ & 45$^{\circ}$ & 90$^{\circ}$ & 45$^{\circ}$  \\
    \hline
    \hline
  \end{tabular}
\end{table}

\begin{table}[!htbp]
\centering
\caption{The rotation angles of HWPs and QWP in the state preparator for classical states of strategy $\lambda_2$ in the case of $I_3$.}\label{t-4}
  \begin{tabular}{ cccc}
    \hline \hline
          & ${h_s^{(p)}}$ & $q_s^{(p)}$ & $h_i^{(p)}$ \\
    \hline    
    $\text{State 1}$ & 0$^{\circ}$ & 0$^{\circ}$ & 0$^{\circ}$  \\
    $\text{State 2}$ & 45$^{\circ}$ & 90$^{\circ}$ & 0$^{\circ}$  \\
    $\text{State 3}$ & 0$^{\circ}$ & 0$^{\circ}$ & 0$^{\circ}$  \\
    \hline
    \hline
  \end{tabular}
\end{table}

\newpage

Following Eq.(7) in the main text and Eq.(\ref{pre-i4-before})$\sim$Eq.(\ref{pre-i4}),

\begin{table}[!htbp]
\centering
\caption{The rotation angles of HWPs and QWP in the state preparator for classical states of strategy $\lambda_1$ in the case of $I_4$.}\label{t-5}
  \begin{tabular}{ cccc}
    \hline \hline
          & ${h_s^{(p)}}$ & $q_s^{(p)}$ & $h_i^{(p)}$ \\
    \hline    
    $\text{State 1}$ & 0$^{\circ}$ & 0$^{\circ}$ & 0$^{\circ}$  \\
    $\text{State 2}$ & 45$^{\circ}$ & 90$^{\circ}$ & 0$^{\circ}$  \\
    $\text{State 3 }$ & 45$^{\circ}$ & 90$^{\circ}$ & 45$^{\circ}$  \\
    $\text{State 4}$ & 0$^{\circ}$ & 0$^{\circ}$ & 0$^{\circ}$  \\
    \hline
    \hline
  \end{tabular}
\end{table}

\begin{table}[!htbp]
\centering
\caption{The rotation angles of HWPs and QWP in the state preparator for classical states of strategy $\lambda_2$ in the case of $I_4$.}\label{t-5}
  \begin{tabular}{ cccc}
    \hline \hline
          & ${h_s^{(p)}}$ & $q_s^{(p)}$ & $h_i^{(p)}$ \\
    \hline    
    $\text{State 1}$ & 0$^{\circ}$ & 0$^{\circ}$ & 0$^{\circ}$  \\
    $\text{State 2}$ & 45$^{\circ}$ & 90$^{\circ}$ & 0$^{\circ}$  \\
    $\text{State 3}$ & 0$^{\circ}$ & 0$^{\circ}$ & 0$^{\circ}$  \\
    $\text{State 4}$ & 0$^{\circ}$ & 0$^{\circ}$ & 0$^{\circ}$  \\
    \hline
    \hline
  \end{tabular}
\end{table}

Following Eq.(7) in the main text and Eq.(\ref{pre-r4-before})$\sim$Eq.(\ref{pre-r4}),

\begin{table}[!htbp]
\centering
\caption{The rotation angles of HWPs and QWP in the state preparator for classical states of strategy $\lambda_1$ in the case of $R_4$.}\label{t-6}
  \begin{tabular}{ cccc}
    \hline \hline
          & ${h_s^{(p)}}$ & $q_s^{(p)}$ & $h_i^{(p)}$ \\
    \hline    
    $\text{State 1}$ & 0$^{\circ}$ & 0$^{\circ}$ & 0$^{\circ}$  \\
    $\text{State 2}$ & 45$^{\circ}$ & 90$^{\circ}$ & 0$^{\circ}$  \\
    $\text{State 3}$ & 0$^{\circ}$ & 0$^{\circ}$ & 0$^{\circ}$  \\
    $\text{State 4}$ & 0$^{\circ}$ & 0$^{\circ}$ & 45$^{\circ}$  \\
    \hline
    \hline
  \end{tabular}
\end{table}

\begin{table}[!htbp]
\centering
\caption{The rotation angles of HWPs and QWP in the state preparator for classical states of strategy $\lambda_2$ in the case of $R_4$.}\label{t-6}
  \begin{tabular}{ cccc}
    \hline \hline
          & ${h_s^{(p)}}$ & $q_s^{(p)}$ & $h_i^{(p)}$ \\
    \hline    
    $\text{State 1}$ & 0$^{\circ}$ & 0$^{\circ}$ & 0$^{\circ}$  \\
    $\text{State 2}$ & 0$^{\circ}$ & 0$^{\circ}$ & 0$^{\circ}$  \\
    $\text{State 3}$ & 0$^{\circ}$ & 0$^{\circ}$ & 0$^{\circ}$  \\
    $\text{State 4}$ & 0$^{\circ}$ & 0$^{\circ}$ & 45$^{\circ}$  \\
    \hline
    \hline
  \end{tabular}
\end{table}

\ \\
\noindent{\bf The detection of quantum dimension witness}
\ \\

The expectations of detect-events for the quantum dimension witness in the case of $I_3$, $I_4$, and $R_4$ are
\begin{align}
E=&\frac{-D_{a,b}+D_{c,b}+D_{c,d}}{D_{a,b}+D_{c,b}+D_{c,d}} \text{     for $I_3$}\\
E=&\frac{-D_{a,b}+D_{c,b}+D_{c,d}+D_{a,d}}{D_{a,b}+D_{c,b}+D_{c,d}+D_{a,d}} \text{     for $I_4$}\\
E=&\frac{-D_{a,b}+D_{c,b}+D_{c,d}+D_{a,d}}{D_{a,b}+D_{c,b}+D_{c,d}+D_{a,d}} \text{     for $R_4$}
\end{align}

\newpage

Following Eq.(8) in the main text and Eq.(\ref{i3-mea-begin})$\sim$Eq.(\ref{i3-mea-end}),

\begin{table}[!htbp]
\centering
\caption{The rotation angles of HWPs and QWPs in the measurement device for detection of quantum states in the case of $I_3$.}\label{t-7}
  \begin{tabular}{ ccccc}
    \hline \hline
          & ${h_s^{(m)}}$ & $q_s^{(m)}$ & $h_i^{(m)}$ & $q_i^{(m)}$ \\
    \hline    
    $|m_1{\rangle}$ & -31.53$^{\circ}$ & -63.06$^{\circ}$ & 0$^{\circ}$ & 0$^{\circ}$  \\
    $|m_2{\rangle}$ & 31.79$^{\circ}$ & 63.57$^{\circ}$ & 0$^{\circ}$ & 0$^{\circ}$  \\
    \hline
    \hline
  \end{tabular}
\end{table}

Following Eq.(8) in the main text and Eq.(\ref{i4-mea-begin})$\sim$Eq.(\ref{i4-mea-end}),

\begin{table}[!htbp]
\centering
\caption{The rotation angles of HWPs and QWPs in the measurement device for detection of quantum states in the case of $I_4$.}\label{t-8}
  \begin{tabular}{ ccccc}
    \hline \hline
          & ${h_s^{(p)}}$ & $q_s^{(p)}$ & $h_i^{(p)}$ & $q_i^{(m)}$ \\
    \hline    
    $|m_1{\rangle}$ & 17.26$^{\circ}$ & 34.53$^{\circ}$ & 39.07$^{\circ}$ & 78.15$^{\circ}$  \\
    $|m_2{\rangle}$ & -42.85$^{\circ}$ & -85.70$^{\circ}$ & 0$^{\circ}$ & 0$^{\circ}$  \\
    $|m_3{\rangle}$ & 30.92$^{\circ}$ & 61.84$^{\circ}$ & 0$^{\circ}$ & 0$^{\circ}$  \\
    \hline
    \hline
  \end{tabular}
\end{table}

Following Eq.(8) in the main text and Eq.(\ref{r4-mea-begin})$\sim$Eq.(\ref{r4-mea-end}),

\begin{table}[!htbp]
\centering
\caption{The rotation angles of HWPs and QWPs in the measurement device for detection of quantum states in the case of $R_4$.}\label{t-9}
  \begin{tabular}{ ccccc}
    \hline \hline
          & ${h_s^{(p)}}$ & $q_s^{(p)}$ & $h_i^{(p)}$ & $q_i^{(m)}$ \\
    \hline    
    $|m_1{\rangle}$ & 50.52$^{\circ}$ & 78.54$^{\circ}$ & 0$^{\circ}$ & 0$^{\circ}$  \\
    $|m_2{\rangle}$ & 28.02$^{\circ}$ & 78.54$^{\circ}$ & 0$^{\circ}$ & 0$^{\circ}$  \\
    \hline
    \hline
  \end{tabular}
\end{table}

\ \\
\noindent{\bf The detection of classical dimension witness}
\ \\

Following Eq.(\ref{i3-deter}), the expectations of detect-events for the classical dimension witness in the case of $I_3$ are
\begin{align}
E=&\frac{D_{a,b}+D_{c,b}-D_{c,d}}{D_{a,b}+D_{c,b}+D_{c,d}} \text{     for $M_1$}\\
E=&\frac{D_{a,b}-D_{c,b}+D_{c,d}}{D_{a,b}+D_{c,b}+D_{c,d}} \text{     for $M_2$}
\end{align}

Following Eq.(\ref{i4-deter}), the expectations of detect-events for the classical dimension witness in the case of $I_4$ are
\begin{align}
E=&\frac{D_{a,b}+D_{c,b}+D_{c,d}-D_{a,d}}{D_{a,b}+D_{c,b}+D_{c,d}+D_{a,d}} \text{     for $M_1$}\\
E=&\frac{D_{a,b}+D_{c,b}-D_{c,d}+D_{a,d}}{D_{a,b}+D_{c,b}+D_{c,d}+D_{a,d}} \text{     for $M_2$}\\
E=&\frac{D_{a,b}-D_{c,b}+D_{c,d}+D_{a,d}}{D_{a,b}+D_{c,b}+D_{c,d}+D_{a,d}} \text{     for $M_3$}
\end{align}

Following Eq.(\ref{r4-deter}), the expectations of detect-events for the classical dimension witness in the case of $R_4$ are
\begin{align}
E=&\frac{D_{a,b}+D_{c,b}-D_{c,d}-D_{a,d}}{D_{a,b}+D_{c,b}+D_{c,d}+D_{a,d}} \text{     for $M_1$}\\
E=&\frac{D_{a,b}-D_{c,b}+D_{c,d}-D_{a,d}}{D_{a,b}+D_{c,b}+D_{c,d}+D_{a,d}} \text{     for $M_2$}
\end{align}

The rotation angles of HWPs and QWPs in the measurement device for classical states in the case of $I_3$, $I_4$, and $R_4$ are all $0^{\circ}$.

\newpage

\ \\
\noindent{\bf The detection of quantum entropy}
\ \\

In quantum state tomography, for the reconstruction of a $s$ order density matrix, $s^2$ projection states $|\nu_j{\rangle}$ are utilized where their projective operators are linearly independent. These projection states are realized by rotating angles of $h_s^{(m)}$,$q_s^{(m)}$,$h_i^{(m)}$ and $q_i^{(m)}$ following the Eq.(8) in the main text. The detect-events $D_{a,b}(|\nu_j{\rangle})$ which represents the coincidence number between port 'a' and 'b' while the projection state is $|\nu_j{\rangle}$ is

\begin{align}
D_{a,b}(|\nu_j{\rangle})=N{\langle}\nu_j|\rho_x|\nu_j{\rangle} \text{ while } 1{\leq}j{\leq}s^2
\end{align}

N is a constant. Since $\rho$ has $s^2-1$ independent variables, it can be linear reconstructed by $D_{a,b}(|\nu_j{\rangle})$

\begin{align}
\rho_x=\frac{\sum_{j=1}^{s^2}M_jD_{a,b}(|\nu_j{\rangle})}{\sum_{j=1}^{s}D_{a,b}(|\nu_j{\rangle})}\label{eeee}
\end{align}

$M_j$($1{\leq}j{\leq}s^2$) are the matrixes which depend on $|\nu_j{\rangle}$. To keep the positive semi-definiteness of $\rho_x$, the maximum likelihood estimation\cite{james2001pra} is used.

\ \\
For the case of $I_3$, $s=3$ and each of $\rho_1$, $\rho_2$ and $\rho_3$ is a 3 order density matrix. We reconstruct $\rho_1$, $\rho_2$ and $\rho_3$ and then obtain the average state as

\begin{align}
\rho=\frac{1}{3}(\rho_1+\rho_2+\rho_3)
\end{align}

\begin{table}[!htbp]
\centering
\caption{The rotation angles of HWPs and QWPs in the measurement device in the case of $I_3$.}\label{t-14}
  \begin{tabular}{ ccccc}
      \hline \hline
          & ${h_s^{(m)}}$ & $q_s^{(m)}$ & $h_i^{(m)}$ & $q_i^{(m)}$ \\
    \hline    
$D_{a,b}(|\nu_1{\rangle})$ &      $0^{\circ}$ &$0^{\circ}$&$0^{\circ}$&$0^{\circ}$\\
$D_{a,b}(|\nu_2{\rangle})$ &      $45^{\circ}$ &$0^{\circ}$&$0^{\circ}$&$0^{\circ}$\\
$D_{a,b}(|\nu_3{\rangle})$ &      $45^{\circ}$&$0^{\circ}$&$45^{\circ}$&$0^{\circ}$\\
$D_{a,b}(|\nu_4{\rangle})$ &      $45^{\circ}$&$0^{\circ}$&$22.5^{\circ}$&$0^{\circ}$\\
$D_{a,b}(|\nu_5{\rangle})$ &      $45^{\circ}$&$0^{\circ}$&$22.5^{\circ}$&$45^{\circ}$\\
$D_{a,b}(|\nu_6{\rangle})$ &      $22.5^{\circ}$&$45^{\circ}$&$22.5^{\circ}$&$45^{\circ}$\\
$D_{a,b}(|\nu_7{\rangle})$ &      $22.5^{\circ}$&$45^{\circ}$&$22.5^{\circ}$&$90^{\circ}$\\
$D_{a,b}(|\nu_8{\rangle})$ &      $22.5^{\circ}$&$45^{\circ}$&$0^{\circ}$&$90^{\circ}$\\
$D_{a,b}(|\nu_9{\rangle})$ &      $22.5^{\circ}$&$0^{\circ}$&$0^{\circ}$&$90^{\circ}$\\
   \hline
    \hline
  \end{tabular}
\end{table}

The matrixes $M_j$ ($1{\leq}j{\leq}9$) are

\begin{align}
&M_1=\frac{1}{2}\begin{bmatrix} 2 & -1+i & 0 \\ -1-i & 0 & 0 \\ 0 & 0 & 0 \end{bmatrix} &
&M_2=\frac{1}{2}\begin{bmatrix} 0 & -1+i & 1-i \\ -1-i & 2 & -1+i \\ 1+i & -1-i & 0 \end{bmatrix} &
&M_3=\frac{1}{2}\begin{bmatrix} 0 & 0 & -2i \\ 0 & 0 & -1+i \\ 2i & -1-i & 2 \end{bmatrix} \nonumber\\
&M_4=\begin{bmatrix} 0 & 0 & i \\ 0 & 0 & -i \\ -i & i & 0  \end{bmatrix} &
&M_5=\begin{bmatrix}  0 & 0 & -1 \\ 0 & 0 & 1 \\ -1 & 1 & 0 \end{bmatrix} &
&M_6=\begin{bmatrix} 0 & 0 & 2 \\ 0 & 0 & 0 \\ 2 & 0 & 0 \end{bmatrix}\nonumber\\
&M_7=\begin{bmatrix} 0 & 0 & 2i \\ 0 & 0 & 0 \\ -2i & 0 & 0 \end{bmatrix} &
&M_8=\begin{bmatrix} 0 & 1 & -1-i \\ 1 & 0 & 0 \\ -1+i & 0 & 0 \end{bmatrix} &
&M_9=\begin{bmatrix} 0 & -i & 0 \\ i & 0 & 0 \\ 0 & 0 & 0 \end{bmatrix}
\end{align}

\ \\
For the case of $I_4$ and $R_4$, $s=4$ and each of $\rho_1$, $\rho_2$, $\rho_3$ and $\rho_4$ is a 4 order density matrix. We reconstruct $\rho_1$, $\rho_2$, $\rho_3$ and $\rho_4$ and then obtain the average state as

\newpage

\begin{align}
\rho=\frac{1}{4}(\rho_1+\rho_2+\rho_3+\rho_4).
\end{align}

\begin{table}[!htbp]
\centering
\caption{The rotation angles of HWPs and QWPs in the measurement device in the cases of $I_4$ and $R_4$. }\label{t-15}
  \begin{tabular}{ ccccc}
    \hline \hline
          & ${h_s^{(m)}}$ & $q_s^{(m)}$ & $h_i^{(m)}$ & $q_i^{(m)}$ \\
    \hline    
$D_{a,b}(|\nu_1{\rangle})$ &      $45^{\circ}$&$0^{\circ}$&$45^{\circ}$&$0^{\circ}$\\
$D_{a,b}(|\nu_2{\rangle})$ &      $45^{\circ}$&$0^{\circ}$&$0^{\circ}$&$0^{\circ}$\\
$D_{a,b}(|\nu_3{\rangle})$ &      $0^{\circ}$&$0^{\circ}$&$0^{\circ}$&$0^{\circ}$\\
$D_{a,b}(|\nu_4{\rangle})$ &      $0^{\circ}$&$0^{\circ}$&$45^{\circ}$&$0^{\circ}$\\
$D_{a,b}(|\nu_5{\rangle})$ &      $22.5^{\circ}$&$0^{\circ}$&$45^{\circ}$&$0^{\circ}$\\
$D_{a,b}(|\nu_6{\rangle})$ &      $22.5^{\circ}$&$0^{\circ}$&$0^{\circ}$&$0^{\circ}$\\
$D_{a,b}(|\nu_7{\rangle})$ &      $22.5^{\circ}$&$45^{\circ}$&$0^{\circ}$&$0^{\circ}$\\
$D_{a,b}(|\nu_8{\rangle})$ &      $22.5^{\circ}$&$45^{\circ}$&$45^{\circ}$&$0^{\circ}$\\
$D_{a,b}(|\nu_9{\rangle})$ &      $22.5^{\circ}$&$45^{\circ}$&$22.5^{\circ}$&$0^{\circ}$\\
$D_{a,b}(|\nu_{10}{\rangle})$ &      $22.5^{\circ}$&$45^{\circ}$&$22.5^{\circ}$&$45^{\circ}$\\
$D_{a,b}(|\nu_{11}{\rangle})$ &      $22.5^{\circ}$&$0^{\circ}$&$22.5^{\circ}$&$45^{\circ}$\\
$D_{a,b}(|\nu_{12}{\rangle})$ &      $45^{\circ}$&$0^{\circ}$&$22.5^{\circ}$&$45^{\circ}$\\
$D_{a,b}(|\nu_{13}{\rangle})$ &      $0^{\circ}$&$0^{\circ}$&$22.5^{\circ}$&$45^{\circ}$\\
$D_{a,b}(|\nu_{14}{\rangle})$ &      $0^{\circ}$&$0^{\circ}$&$22.5^{\circ}$&$90^{\circ}$\\
$D_{a,b}(|\nu_{15}{\rangle})$ &      $45^{\circ}$&$0^{\circ}$&$22.5^{\circ}$&$90^{\circ}$\\
$D_{a,b}(|\nu_{16}{\rangle})$ &      $22.5^{\circ}$&$0^{\circ}$&$22.5^{\circ}$&$90^{\circ}$\\
   \hline
       \hline
  \end{tabular}
\end{table}

The matrixes $M_j$ ($1{\leq}j{\leq}16$) are

\begin{align}
&M_1=\frac{1}{2}\begin{bmatrix} 0 & 0 & 1 & 0 \\ 0 & 0 & -1-i & i \\ 1 & -1+i & 2 & -1-i \\ 0 & -i & -1+i & 0 \end{bmatrix} &
&M_2=\frac{1}{2}\begin{bmatrix} 0 & -1+i & 1 & 0 \\ -1-i & 2 & -1-i & i \\ 1 & -1+i & 0 & 0 \\ 0 & -i & 0 & 0 \end{bmatrix}\nonumber\\
&M_3=\frac{1}{2}\begin{bmatrix} 2 & -1+i & 1 & -1-i \\ -1-i & 0 & 0 & i \\ 1 & 0 & 0 & 0 \\ -1+i & -i & 0 & 0 \end{bmatrix} & &M_4=\frac{1}{2}\begin{bmatrix} 0 & 0 & 1 & -1-i \\ 0 & 0 & 0 & i \\ 1 & 0 & 0 & -1-i \\ -1+i & -i & -1+i & 2 \end{bmatrix}\nonumber\\
&M_5=\frac{1}{2}\begin{bmatrix} 0 & 0 & -1+i & 0 \\ 0 & 0 & 0 & 1-i \\ -1-i & 0 & 0 & 2i \\ 0 & 1+i & -2i & 0 \end{bmatrix} &
&M_6=\frac{1}{2}\begin{bmatrix} 0 & -2i & -1+i & 0 \\ 2i & 0 & 0 & 1-i \\ -1-i & 0 & 0 & 0 \\ 0 & 1+i & 0 & 0 \end{bmatrix}\nonumber\\
&M_7=\frac{1}{2}\begin{bmatrix} 0 & 2 & -1+i & 0 \\ 2 & 0 & 0 & -1+i \\ -1-i & 0 & 0 & 0 \\ 0 & -1-i & 0 & 0 \end{bmatrix} &
&M_8=\frac{1}{2}\begin{bmatrix} 0 & 0 & -1+i & 0 \\ 0 & 0 & 0 & -1+i \\ -1-i & 0 & 0 & 2 \\ 0 & -1-i & 2 & 0 \end{bmatrix}\nonumber\\
&M_9=\begin{bmatrix} 0 & 0 & -i & 0 \\ 0 & 0 & 0 & -i \\ i & 0 & 0 & 0 \\ 0 & i & 0 & 0 \end{bmatrix} &
&M_{10}=\begin{bmatrix} 0 & 0 & 1 & 0 \\ 0 & 0 & 0 & 1 \\ 1 & 0 & 0 & 0 \\ 0 & 1 & 0 & 0 \end{bmatrix}\nonumber
\end{align}
\begin{align}
&M_{11}=\begin{bmatrix} 0 & 0 & -i & 0 \\ 0 & 0 & 0 & i \\ i & 0 & 0 & 0 \\ 0 & -i & 0 & 0 \end{bmatrix} &
&M_{12}=\frac{1}{2}\begin{bmatrix} 0 & 0 & -1+i & 0 \\ 0 & 0 & 2 & -1-i \\ -1-i & 2 & 0 & 0 \\ 0 & -1+i & 0 & 0 \end{bmatrix}\nonumber\\
&M_{13}=\frac{1}{2}\begin{bmatrix} 0 & 0 & -1+i & 2 \\ 0 & 0 & 0 & -1-i \\ -1-i & 0 & 0 & 0 \\ 2 & -1+i & 0 & 0 \end{bmatrix} &
&M_{14}=\frac{1}{2}\begin{bmatrix} 0 & 0 & -1-i & 2i \\ 0 & 0 & 0 & 1-i \\ -1+i & 0 & 0 & 0 \\ -2i & 1+i & 0 & 0 \end{bmatrix}\nonumber\\
&M_{15}=\frac{1}{2}\begin{bmatrix} 0 & 0 & -1-i & 0 \\ 0 & 0 & 2i & 1-i \\ -1+i & -2i & 0 & 0 \\ 0 & 1+i & 0 & 0 \end{bmatrix} &
&M_{16}=\begin{bmatrix} 0 & 0 & 1 & 0 \\ 0 & 0 & 0 & -1 \\ 1 & 0 & 0 & 0 \\ 0 & -1 & 0 & 0 \end{bmatrix}
\end{align}

\ \\
\noindent{\bf The detection of classical entropy}
\ \\

We only need to record the distribution of click number of each detect-event while all rotation angles of HWPs and QWPs are $0^{\circ}$.

\newpage

\subsection{D. Counter-examples for the hypotheses in Discussion of the main text}


\begin{hypothesis}
$\min_{\rho_x\in\mathbb{C}^d}S(\rho)=\min_{\rho_x\in\mathbb{C}^n}S(\rho)$ while $\sum_{x=1}^n\sum_{y=1}^l\alpha_{xy}\tr(\rho_xM_y)=w_d$, $\rho=\sum_{x=1}^n\rho_x/n$, and $L_{d-1}^{(q)}{<}w_d{\leq}L_d^{(q)}$, ($d{<}n$) where $L_d^{(q)}$ is the $d$-dimensional quantum bound of the dimension witness $w_d$.
\end{hypothesis}


\noindent Counter-example 1:

From Eq.(3) of the main text, the dimension witness $I_4$ can be written as
\begin{align}
I_4=&\tr[\rho_1(M_1+M_2+M_3)]+\tr[\rho_2(M_1+M_2-M_3)]+\tr[\rho_3(M_1-M_2)]+\tr[\rho_1(-M_1)]\nonumber\\
{\leq}&\lambda_{\max}(M_1+M_2+M_3)+\lambda_{\max}(M_1+M_2-M_3)+\lambda_{\max}(M_1-M_2)+\lambda_{\max}(-M_1) \label{TT_0}
\end{align}
where $\lambda_{\max}(\Omega)$ represents the maximum eigenvalue of observable $\Omega$.

Let $M_k=2\hat{U}^{-1}|m_k{\rangle}{\langle}m_k|\hat{U}-1$  where $\hat{U}$ is a 2 order unitary matrix and $1{\leq}k{\leq}3$. Since $\{|m_k{\rangle}\}$ are in $\mathbb{C}^2$, without loss of generality, let
\begin{align}
|m_1{\rangle}=&(1,0)\\
|m_2{\rangle}=&(\cos{\frac{\theta}{2}},\sin{\frac{\theta}{2}}) \text{ , while $\theta\in[0,\pi]$ } \label{TT_6}\\
|m_3{\rangle}=&(\cos{\frac{\phi}{2}},\sin{\frac{\phi}{2}}e^{i\varphi}) \text{ , while $\phi\in(-\pi,\pi]$ and $\varphi\in[0,\pi)$ }
\end{align}

Then
\begin{align}
M_1+M_2+M_3=&\hat{U}^{-1}\begin{bmatrix} 1+\cos\theta+\cos\phi & \sin\theta+\sin\phi{e^{i\varphi}} \\ \sin\theta+\sin\phi{e^{-i\varphi}} & -1-\cos\theta-\cos\phi \end{bmatrix}\hat{U} \label{QQ_1}\\
M_1+M_2-M_3=&\hat{U}^{-1}\begin{bmatrix} 1+\cos\theta-\cos\phi & \sin\theta-\sin\phi{e^{i\varphi}} \\ \sin\theta-\sin\phi{e^{-i\varphi}} & -1-\cos\theta+\cos\phi \end{bmatrix}\hat{U}\\
M_1-M_2=&\hat{U}^{-1}\begin{bmatrix} 1-\cos\theta & -\sin\theta \\ -\sin\theta & -1+\cos\theta \end{bmatrix}\hat{U}\\
-M_1=&\hat{U}^{-1}\begin{bmatrix} -1 & 0 \\ 0 & 1 \end{bmatrix}\hat{U} \label{QQ_2}
\end{align}

Hence,
\begin{align}
\lambda_{\max}(M_1+M_2+M_3)=&\sqrt{(3+2\cos{\theta})+(2\cos\theta\cos\phi+2\sin\theta\sin\phi\cos\varphi)}\label{TT_1}\\
\lambda_{\max}(M_1+M_2-M_3)=&\sqrt{(3+2\cos{\theta})-(2\cos\theta\cos\phi+2\sin\theta\sin\phi\cos\varphi)}\\
\lambda_{\max}(M_1-M_2)=&\sqrt{2-2\cos{\theta}}\\
\lambda_{\max}(-M_1)=&1 \label{TT_2}
\end{align}

Substitute Eq.(\ref{TT_1})$\sim$(\ref{TT_2}) into Eq.(\ref{TT_0}),
\begin{align}
I_4{\leq}&\sqrt{(3+2\cos{\theta})+(2\cos\theta\cos\phi+2\sin\theta\sin\phi\cos\varphi)}+\nonumber\\
&\sqrt{(3+2\cos{\theta})-(2\cos\theta\cos\phi+2\sin\theta\sin\phi\cos\varphi)}+\nonumber\\
&\sqrt{2-2\cos{\theta}}+1\nonumber\\
{\leq}&2\sqrt{(3+2\cos{\theta})}+\sqrt{2-2\cos{\theta}}+1\nonumber\\
=&\frac{1}{2}\sqrt{(3+2\cos{\theta})}+\frac{1}{2}\sqrt{(3+2\cos{\theta})}+\frac{1}{2}\sqrt{(3+2\cos{\theta})}+\frac{1}{2}\sqrt{(3+2\cos{\theta})}+\sqrt{2-2\cos{\theta}}+1\nonumber\\
{\leq}&\sqrt{5[\frac{3+2\cos{\theta}}{4}+\frac{3+2\cos{\theta}}{4}+\frac{3+2\cos{\theta}}{4}+\frac{3+2\cos{\theta}}{4}+(2-2\cos{\theta})]}+1\nonumber\\
=&6 \label{TT_4}
\end{align}

The second sign of less than or equal to ($\leq$) becomes equal to (=) if
\begin{align}
2\cos\theta\cos\phi+2\sin\theta\sin\phi\cos\varphi=0 \label{TT_3}
\end{align}

The third sign of less than or equal to ($\leq$) becomes equal to (=) if
\begin{align}
\frac{1}{2}\sqrt{(3+2\cos{\theta})}=&\sqrt{2-2\cos{\theta}}\nonumber\\
\Rightarrow\cos\theta=&0.5 \label{TT_5}
\end{align}

Considering Eq.(\ref{TT_6}), from Eq.(\ref{TT_5}) we can obtain
\begin{align}
\theta=\frac{\pi}{3} \label{TT_7}
\end{align}

On the other hand, the vectors $\{|v_k{\rangle}\}$, where $\rho_k=\hat{U}^{-1}|v_k{\rangle}{\langle}v_k|\hat{U}$, should be the eigenvectors corresponding the maximum eigenvalues of Eq.(\ref{QQ_1})$\sim$(\ref{QQ_2}). Hence,
\begin{align}
|v_1{\rangle}=&\frac{(1+\cos\theta+\cos\phi+\sqrt{(1+\cos\theta+\cos\phi)^2+|\sin\theta+\sin\phi{e^{i\varphi}}|^2},\sin\theta+\sin\phi{e^{-i\varphi}})}{\sqrt{2[{(1+\cos\theta+\cos\phi)^2+|\sin\theta+\sin\phi{e^{i\varphi}}|^2}]+2(1+\cos\theta+\cos\phi)\sqrt{(1+\cos\theta+\cos\phi)^2+|\sin\theta+\sin\phi{e^{i\varphi}}|^2}}} \label{TT_8}\\
|v_2{\rangle}=&\frac{(1+\cos\theta-\cos\phi+\sqrt{(1+\cos\theta-\cos\phi)^2+|\sin\theta-\sin\phi{e^{i\varphi}}|^2},\sin\theta-\sin\phi{e^{-i\varphi}})}{\sqrt{2[{(1+\cos\theta-\cos\phi)^2+|\sin\theta-\sin\phi{e^{i\varphi}}|^2}]+2(1+\cos\theta-\cos\phi)\sqrt{(1+\cos\theta-\cos\phi)^2+|\sin\theta-\sin\phi{e^{i\varphi}}|^2}}}\\
|v_3{\rangle}=&\frac{(1-\cos\theta+\sqrt{(1-\cos\theta)^2+|\sin\theta|^2},-\sin\theta)}{\sqrt{2[{(1-\cos\theta)^2+|\sin\theta|^2}]+2(1-\cos\theta)\sqrt{(1-\cos\theta)^2+|\sin\theta|^2}}}\\
|v_3{\rangle}=&(0,1) \label{TT_9}
\end{align}

Substitute Eq.(\ref{TT_3}) and Eq.(\ref{TT_7}) into Eq.(\ref{TT_8})$\sim$(\ref{TT_9}),
\begin{align}
\rho_1=\hat{U}^{-1}|v_1{\rangle}{\langle}v_1|\hat{U}=&\hat{U}^{-1}\frac{1}{8}\begin{bmatrix} 7+2{\cos\phi} & \sqrt{3}+2{\sin{\phi}e^{-i\varphi}} \\ {\sqrt{3}}+2{\sin{\phi}e^{i\varphi}} & {1}-2{\cos\phi} \end{bmatrix}\hat{U} \label{TT_10}\\
\rho_2=\hat{U}^{-1}|v_2{\rangle}{\langle}v_2|\hat{U}=&\hat{U}^{-1}\frac{1}{8}\begin{bmatrix} {7}-2{\cos\phi} & {\sqrt{3}}-2{\sin{\phi}e^{-i\varphi}} \\ {\sqrt{3}}-2{\sin{\phi}e^{i\varphi}} & {1}+2{\cos\phi} \end{bmatrix}\hat{U} \\
\rho_3=\hat{U}^{-1}|v_3{\rangle}{\langle}v_3|\hat{U}=&\hat{U}^{-1}\frac{1}{4}\begin{bmatrix} 3 & -{\sqrt{3}} \\ -{\sqrt{3}} & 1 \end{bmatrix}\hat{U} \\
\rho_4=\hat{U}^{-1}|v_4{\rangle}{\langle}v_4|\hat{U}=&\hat{U}^{-1}\begin{bmatrix} 0 & 0 \\ 0 & 1 \end{bmatrix}\hat{U}  \label{TT_11}
\end{align}

Hence,

\begin{align}
\rho=\frac{1}{4}(\rho_1+\rho_2+\rho_3+\rho_4)=\hat{U}^{-1}\frac{1}{8}\begin{bmatrix} 5 & 0 \\ 0 & 3 \end{bmatrix}\hat{U}
\end{align}

For $I_4=L_2^{(q)}=6$, {{although $\phi$, $\varphi$ and $\hat{U}$ are not unique, the Von Neumann entropy of $\rho$ is unique}}. Then while $I_4=6$,
\begin{align}
\min_{\rho_x\in\mathbb{C}^2}S(\rho)=0.954 \text{ bit} \label{I_4-counter example-1}
\end{align}

On the other hand, there exist states for ququart
\begin{align}
|\psi_1{\rangle}=&(1,0,0,0)\\
|\psi_2{\rangle}=&(0.8290,0.5592,0,0)\\
|\psi_3{\rangle}=&(0.7660,-0.6428,0,0)\\
|\psi_4{\rangle}=&(0.8844,-0.0191,-0.1204,0.4506)
\end{align}
and the measurement operators $M_y=1-2|m_y{\rangle}{\langle}m_y|$
\begin{align}
|m_1{\rangle}=&(0.2229,-0.0058,-0.2516,0.9418)\\
|m_2{\rangle}=&(0.4838,-0.8752,0,0)\\
|m_3{\rangle}=&(0.4695,0.8829,0,0)
\end{align}
where $I_4=6.000$ and $S(\rho)=0.912$ bit. Hence,
\begin{align}
\min_{\rho_x\in\mathbb{C}^4}S(\rho){\leq}0.9122 \text{ bit} \label{I_4-counter example-2}
\end{align}
From Eq.(\ref{I_4-counter example-1}) and Eq.(\ref{I_4-counter example-2}),
\begin{align}
\min_{\rho_x\in\mathbb{C}^2}S(\rho)>\min\limits_{\rho_x\in\mathbb{C}^4}S(\rho)
\end{align}
which disproves the hypothesis.\\


\noindent Counter-example 2:

For $R_4=L_3^{(q)}=6.472$, there are sets of states $\rho_x$ in $\mathbb{C}^3$\cite{ahrens2014prl}. The states are $\rho_x=\hat{U}^{-1}|\psi_x{\rangle}{\langle}\psi_x|\hat{U}$ where $\hat{U}$ is a 3 order unitary matrix and
\begin{align}
|\psi_1{\rangle}=&(0,0,1)\\
|\psi_2{\rangle}=&(\frac{1}{\sqrt{2}},-\frac{1}{\sqrt{2}},0)\\
|\psi_3{\rangle}=&(\frac{1}{\sqrt{2}},\frac{1}{\sqrt{2}},0)\\
|\psi_4{\rangle}=&(1,0,0)
\end{align}
The measurement operators are $M_y=1-2\hat{U}^{-1}|m_y{\rangle}{\langle}m_y|\hat{U}$ where
\begin{align}
|m_1{\rangle}=&(\frac{\sqrt{5}+1}{\sqrt{10+2\sqrt{5}}},\frac{2}{\sqrt{10+2\sqrt{5}}},0)\\
|m_2{\rangle}=&(\frac{\sqrt{5}+1}{\sqrt{10+2\sqrt{5}}},\frac{-2}{\sqrt{10+2\sqrt{5}}},0)
\end{align}

Although $\hat{U}$ is not unique, the Von Neumann entropy of $\rho$ is unique. Then while $R_4=6.472$,
\begin{align}
\min_{\rho_x\in\mathbb{C}^3}S(\rho)=1.5 \text{ bit} \label{R_4-counter example-1}
\end{align}

On the other hand, there exist states for ququart
\begin{align}
|\psi_1{\rangle}=&(1,0,0,0)\\
|\psi_2{\rangle}=&(0.5892,0.5736,0.5690,0)\\
|\psi_3{\rangle}=&(-0.6257,0.5584,0.0293,0.5439)\\
|\psi_4{\rangle}=&(0.0175,0.9998,0,0)
\end{align}
and the measurement operators $M_y=1-2|m_y^{(1)}{\rangle}{\langle}m_y^{(1)}|-2|m_y^{(2)}{\rangle}{\langle}m_y^{(2)}|$
\begin{align}
|m_1^{(1)}{\rangle}=&(-0.2925,0.8860,-0.0987,0.3460)\\
|m_1^{(2)}{\rangle}=&(-0.1432,-0.3525,0.3117,0.8707)\\
|m_2^{(1)}{\rangle}=&(0.2906,0.8847,0.3496,-0.1030)\\
|m_2^{(2)}{\rangle}=&(0.1143,-0.3604,0.8911,0.2511)
\end{align}
where $R_4=6.472$ and $S(\rho)=1.418$ bit. Hence,
\begin{align}
\min_{\rho_x\in\mathbb{C}^4}S(\rho){\leq}1.418 \text{ bit} \label{R_4-counter example-2}
\end{align}

From Eq.(\ref{R_4-counter example-1}) and Eq.(\ref{R_4-counter example-2}),
\begin{align}
\min\limits_{\rho_x\in\mathbb{C}^3}S(\rho)>\min\limits_{\rho_x\in\mathbb{C}^4}S(\rho)
\end{align}
which also disproves the hypothesis. This is also shown in the FIG.~3 in the Supplementary Material of Ref.\cite{chaves2015prl}.\\


\begin{hypothesis}
For any linear dimension witness $\sum_{x=1}^n\sum_{y=1}^l\alpha_{xy}\tr(\rho_xM_y)=w_d$, the right part of Eq.(11) in Ref.\cite{chaves2015prl} is the minimal classical entropy.
\end{hypothesis}


Let $\lambda_{i,j}$ be the $j$th strategy for an $i$-dimensional classical system. $w(\lambda_{i,j})$ represents the classical dimension witness,
\begin{align}
w(\lambda_{i,j})=\sum_{x=1}^{n}\sum_{y=1}^{l}\alpha_{xy}E_{xy}^{(\lambda_{i,j})}
\end{align}

The maximal value of the dimension witness for the $d$-dimensional classical system is $L_d$, hence
\begin{align}
L_d=\max_{j}w(\lambda_{d,j})\label{maxx}
\end{align}

Here, we use $H([p_0,p_1,\ldots,p_{n-2},p_{n-1}])$ to represent the classical entropy $H(M)$. Without loss of generality, let $p_0{\geq}p_1{\geq}\ldots{\geq}p_{n-2}{\geq}p_{n-1}$. Then
\begin{align}
H([p_0,p_1,\ldots,p_{n-2},p_{n-1}])=H(M)=\sum_{k=0}^{n-1}-p_k\log_2p_k \label{AA}
\end{align}
For the case of the $d$-dimensional system where $d<n$, $p_k=0$ while $d{\leq}k{\leq}n-1$. Then we use $\lim_{x\rightarrow0}x\log_2x=0$ to keep the effectivity of the Eq.(\ref{AA}).

\ \\
\noindent Counter-example 3 :

Let
\begin{align}
A_{xy}=\begin{bmatrix} \alpha_{x=1,y=1} & \alpha_{x=1,y=2} \\ \alpha_{x=2,y=1} & \alpha_{x=2,y=2} \\ \alpha_{x=3,y=1} & \alpha_{x=3,y=2} \\ \alpha_{x=4,y=1} & \alpha_{x=4,y=2} \end{bmatrix}=\begin{bmatrix} 0.4955&0.7775\\ -0.6092&-0.6572\\ 0.0048&-0.5283\\ -0.5877&0.8258 \end{bmatrix}
\end{align}

Then the maximal value of the dimension witness by the 4-dimensional classical system is $L_4=4.4860$, while
\begin{align}
E_{m,y}^{(\lambda_{4,1})}=\begin{bmatrix}   E(m=0,y=1,\lambda_{4,1}) &  \ldots &  E(m=3,y=1,\lambda_{4,1}) \\ E(m=0,y=2,\lambda_{4,1}) &  \ldots &  E(m=3,y=2,\lambda_{4,1}) \end{bmatrix}=\begin{bmatrix}   1&-1&1&-1 \\ 1&-1&-1&1 \end{bmatrix}
\end{align}
and
\begin{align}
P_{m,x}^{(\lambda_{4,1})}=\begin{bmatrix}   P(m=0|x=1,\lambda_{4,1}) & \ldots &  P(m=0|x=4,\lambda_{4,1}) \\ & \vdots & \\   P(m=3|x=1,\lambda_{4,1}) & \ldots &  P(m=3|x=4,\lambda_{4,1}) \end{bmatrix}=\begin{bmatrix} 1&0&0&0 \\ 0&1&0&0 \\ 0&0&1&0 \\ 0&0&0&1 \end{bmatrix}
\end{align}

The maximal value of the dimension witness by the 3-dimensional classical system is $L_3=4.4764$, while
\begin{align}
E_{m,y}^{(\lambda_{3,1})}=\begin{bmatrix}   E(m=0,y=1,\lambda_{3,1}) &  \ldots &  E(m=3,y=1,\lambda_{3,1}) \\ E(m=0,y=2,\lambda_{3,1}) &  \ldots &  E(m=3,y=2,\lambda_{3,1}) \end{bmatrix}=\begin{bmatrix}   -1&1&-1&1 \\ -1&1&1&-1 \end{bmatrix}
\end{align}

and
\begin{align}
P_{m,x}^{(\lambda_{3,1})}=\begin{bmatrix}   P(m=0|x=1,\lambda_{3,1}) & \ldots &  P(m=0|x=4,\lambda_{3,1}) \\ & \vdots & \\   P(m=3|x=1,\lambda_{3,1}) & \ldots &  P(m=3|x=4,\lambda_{3,1}) \end{bmatrix}=\begin{bmatrix} 0&1&1&0 \\ 1&0&0&0 \\ 0&0&0&1 \\ 0&0&0&0 \end{bmatrix}
\end{align}

The maximal value of the dimension witness by the 2-dimensional classical system is $L_2^{(c)}=3.4854$, while
\begin{align}
E_{m,y}^{(\lambda_{2,1})}=\begin{bmatrix}   E(m=0,y=1,\lambda_{2,1}) &  \ldots &  E(m=3,y=1,\lambda_{2,1}) \\ E(m=0,y=2,\lambda_{2,1}) &  \ldots &  E(m=3,y=2,\lambda_{2,1}) \end{bmatrix}=\begin{bmatrix}   1&-1&1&-1 \\ 1&-1&-1&1 \end{bmatrix}
\end{align}

and
\begin{align}
P_{m,x}^{(\lambda_{2,1})}=\begin{bmatrix}   P(m=0|x=1,\lambda_{2,1}) & \ldots &  P(m=0|x=4,\lambda_{2,1}) \\ & \vdots & \\   P(m=3|x=1,\lambda_{2,1}) & \ldots &  P(m=3|x=4,\lambda_{2,1}) \end{bmatrix}=\begin{bmatrix} 1&0&0&1 \\ 0&1&1&0 \\ 0&0&0&0 \\ 0&0&0&0 \end{bmatrix}\label{TTTTT}
\end{align}

The maximal value of the dimension witness by the 1-dimensional classical system is $L_1=1.1144$, while
\begin{align}
E_{m,y}^{(\lambda_{1,1})}=\begin{bmatrix}   E(m=0,y=1,\lambda_{1,1}) &  \ldots &  E(m=3,y=1,\lambda_{1,1}) \\ E(m=0,y=2,\lambda_{1,1}) &  \ldots &  E(m=3,y=2,\lambda_{1,1}) \end{bmatrix}=\begin{bmatrix}   -1&-1&1&1 \\ 1&-1&-1&1 \end{bmatrix}
\end{align}

and
\begin{align}
P_{m,x}^{(\lambda_{1,1})}=\begin{bmatrix}   P(m=0|x=1,\lambda_{1,1}) & \ldots &  P(m=0|x=4,\lambda_{1,1}) \\ & \vdots & \\   P(m=3|x=1,\lambda_{1,1}) & \ldots &  P(m=3|x=4,\lambda_{1,1}) \end{bmatrix}=\begin{bmatrix} 1&1&1&1 \\ 0&0&0&0 \\ 0&0&0&0 \\ 0&0&0&0 \end{bmatrix}
\end{align}

For dimension witness $w_d=L_2$, the minimal classical entropy is $H([p_0,\ldots,p_3])$, subject to
\begin{align}
\sum_{x=1}^4\sum_{i=1}^4\sum_{j}q_{\lambda_{i,j}}P(m=k|x,\lambda_{i,j})/4&=p_k \text{ while }0{\leq}k{\leq}3\\
\sum_{i=1}^4\sum_{j}q_{\lambda_{i,j}}w(\lambda_{i,j})&=L_2 \label{sup-last-1}\\
\sum_{i=1}^4\sum_{j}q_{\lambda_{i,j}}&=1 \label{sup-last-2}
\end{align}

Following Eq.(\ref{sup-last-1}), Eq.(\ref{sup-last-2}), and Eq.(\ref{maxx})
\begin{align}
L_2&{\leq}\sum_{i=1}^{4}\sum_{j}q_{\lambda_{i,j}}\max_{j}w(\lambda_{i,j})\nonumber\\
&=\sum_{i=1}^{4}\sum_{j}q_{\lambda_{i,j}}L_i\nonumber\\
&=\sum_{j}q_{\lambda_{1,j}}L_1+(1-\sum_{j}q_{\lambda_{1,j}}-\sum_{j}q_{\lambda_{3,j}}-\sum_{j}q_{\lambda_{4,j}})L_2+\sum_{j}q_{\lambda_{3,j}}L_3+\sum_{j}q_{\lambda_{4,j}}L_4\nonumber\\
\Rightarrow\sum_jq_{\lambda_{1,j}}&{\leq}\frac{L_3-L_2}{L_2-L_1}\sum_jq_{\lambda_{3,j}}+\frac{L_4-L_2}{L_2-L_1}\sum_jq_{\lambda_{4,j}}\nonumber\\
&=0.4180\sum_jq_{\lambda_{3,j}}+0.4220\sum_jq_{\lambda_{4,j}} \label{sup-last-3}
\end{align}

Then, for $p_0$,
\begin{align}
p_0&=\sum_{x=1}^4\sum_{\lambda}q_{\lambda}P(m=0|x,\lambda)/4\nonumber\\
&=\sum_{i=1}^{4}\sum_{j}q_{\lambda_{i,j}}\{\sum_{x=1}^4P(m=0|x,\lambda_{i,j})/4\} \label{sup-last-4}
\end{align}

Considering that for different dimensional systems, $\sum_{x=1}^4P(m=0|x,\lambda_{i,j})/4$ have different upper bounds. $\sum_{x=1}^4P(m=0|x,\lambda_{1,j})/4{\leq}1$ for 1-dimensional systems, $\sum_{x=0}^4P(m=0|x,\lambda_{2,j})/4{\leq}3/4$ for 2-dimensional systems, $\sum_{x=1}^4P(m=0|x,\lambda_{3,j})/4{\leq}1/2$ for 3-dimensional systems, and $\sum_{x=1}^4P(m=0|x,\lambda_{4,j})/4{\leq}1/4$ for 4-dimensional systems. Here we notice that $\sum_{x=1}^4P(m=0|x,\lambda_{2,1})/4{\leq}1/2$ for the case of $\lambda_{2,1}$ from Eq.(\ref{TTTTT}). Hence
\begin{align}
p_0&{\leq}\sum_{j}q_{\lambda_{1,j}}\cdot1+\sum_{j{\neq}2}q_{\lambda_{2,j}}\cdot\frac{3}{4}+q_{\lambda_{2,1}}\cdot\frac{1}{2}+\sum_{j}q_{\lambda_{3,j}}\cdot\frac{1}{2}+\sum_{j}q_{\lambda_{4,j}}\cdot\frac{1}{4}\nonumber\\
&=\frac{3}{4}-\frac{1}{4}q_{\lambda_{2,1}}+\frac{1}{4}[\sum_{j}q_{\lambda_{1,j}}-\sum_{j}q_{\lambda_{3,j}}-2\sum_{j}q_{\lambda_{4,j}}]\nonumber\\
&=\frac{3}{4}-\frac{1}{4}q_{\lambda_{2,1}}+\frac{1}{4}[\sum_{j}q_{\lambda_{1,j}}-0.4180\sum_{j}q_{\lambda_{3,j}}-0.4220\sum_{j}q_{\lambda_{4,j}}]-\frac{0.5820}{4}\sum_{j}q_{\lambda_{3,j}}-\frac{1.5780}{4}\sum_{j}q_{\lambda_{4,j}}\nonumber\\
&{\leq}\frac{3}{4}-\frac{1}{4}q_{\lambda_{2,1}}-\frac{0.5820}{4}\sum_{j}q_{\lambda_{3,j}}-\frac{1.5780}{4}\sum_{j}q_{\lambda_{4,j}}\label{sup-last-5}
\end{align}

Since Eq.(\ref{sup-last-1}), $q_{\lambda_{2,1}}$, $\sum_{j}q_{\lambda_{3,j}}$ and $\sum_{j}q_{\lambda_{4,j}}$ can't be 0 simultaneously,
\begin{align}
p_0<\frac{3}{4}\label{p0upper}
\end{align}

Since $p_0{\geq}p_1{\geq}p_2{\geq}p_3$ and $w_d=L_2>L_1$,
\begin{align}
p_0>\frac{1}{4}\label{p0lower}
\end{align}

Since $-x\log_2x-y\log_2y{\geq}-(x+y)\log_2(x+y)$ while $x{\geq}0$, $y{\geq}0$ and $x+y{\leq}1$, then
\begin{align}
H([p_0,p_1,p_2,p_3]){\geq}H([p_0,p_1+p_2+p_3,0,0])
\end{align}

Since $-x\log_2x-(1-x)\log_2(1-x)>-y\log_2y-(1-y)\log_2(1-y)$ while $x{\geq}0$, $y{\geq}0$ and $0<x<y{\leq}\frac{1}{2}$, then considering Eq.(\ref{p0upper}) and Eq.(\ref{p0lower}),
\begin{align}
H([p_0,p_1+p_2+p_3,0,0])>H([\frac{3}{4},\frac{1}{4},0,0])=-\frac{3}{4}\log_2\frac{3}{4}-\frac{1}{4}\log_2\frac{1}{4}=0.811 \text{ bit}
\end{align}
Hence
\begin{align}
{H([p_0,p_1,p_2,p_3])}{>}0.811 \text{ bit}
\end{align}

On the other hand, while using the strategy of Eq.(11) in Ref.\cite{chaves2015prl}, for the case of $w_d=L_2^{(c)}=3.4854$, the minimal classical entropy $H(M)$ is $0.811$ bit. Hence, the hypothesis is disproved.




\end{document}